\definecolor{Ckt}{HTML}{E41A1C}
\definecolor{Min}{HTML}{4D4D4D}
\definecolor{MinMore}{HTML}{377EB8}
\definecolor{Data}{HTML}{984EA3}
\newtheorem{theorem}{Theorem}
\newtheorem{definition}{Definition}
\DeclareMathOperator*{\argmin}{arg\,min}
\DeclareMathOperator*{\minimize}{minimize}
\newcommand{\ZZ}{\mathbb Z}
\newcommand{\RR}{\mathbb R}
\begin{document}

\title{A log-linear time algorithm for constrained changepoint detection}
\author{
Toby Dylan Hocking (toby.hocking@r-project.org)\\
Guillem Rigaill (guillem.rigaill@inra.fr)\\
Paul Fearnhead (p.fearnhead@lancaster.ac.uk)\\
Guillaume Bourque (guil.bourque@mcgill.ca)
}
\maketitle

\begin{abstract}
  Changepoint detection is a central problem in time series and
  genomic data. For some applications, it is natural to impose
  constraints on the directions of changes. One example is ChIP-seq
  data, for which adding an up-down constraint improves peak detection
  accuracy, but makes the optimization problem more complicated. We
  show how a recently proposed functional pruning technique can be
  adapted to solve such constrained changepoint detection
  problems. This leads to a new algorithm which can solve problems
  with arbitrary affine constraints on adjacent segment means, and
  which has empirical time complexity that is log-linear in the amount
  of data. This algorithm achieves state-of-the-art accuracy in a
  benchmark of several genomic data sets, and is orders of magnitude
  faster than existing algorithms that have similar accuracy. Our
  implementation is available as the PeakSegPDPA function in the coseg
  R package, \url{https://github.com/tdhock/coseg}
\end{abstract}

\newpage

\tableofcontents

\newpage

\section{Introduction}

Changepoint detection is a central problem in fields such as finance
or genomics, where $n$ data are gathered in a sequence over time or
space. Many models define the optimal changepoints using maximum
likelihood, resulting in a discrete optimization problem. Multiple
changepoint detection models seek the optimal $K$ segments
($K-1$ changes), which amounts to optimizing likelihood
parameters over a space that contains $O(n^{K-1})$ discrete
arrangements of changepoints. In general this problem can be solved
in $O(Kn^2)$ time using the original dynamic programming algorithm of
\citet{segment-neighborhood}. Recently proposed pruning techniques
reduce the number of changepoints considered by the algorithm, thus
reducing time complexity to $O(K n\log n)$ while maintaining
optimality \citep{pruned-dp, johnson, fpop}.

In ``unconstrained'' changepoint models, there are no contraints
between model parameters on separate segments. To regularize and obtain a more
interpretable model, it is often desirable to introduce constraints
between model parameters before and after changepoints. 
For example, the main problem that motivates this paper is peak
detection in ChIP-seq data, which provide noisy measurements of
protein binding or modification throughout a genome \citep{practical}. An
up-down constrained changepoint detection model has been shown to
achieve state-of-the-art peak detection accuracy in ChIP-seq data
\citep{HOCKING-PeakSeg}. The constraints of this model force an up
change in the segment mean parameter after each down change, and vice
versa.
The fastest existing solver for this problem is the Constrained
Dynamic Programming Algorithm (CDPA), which has two issues. First, it
is a heuristic algorithm that is not guaranteed to recover the optimal
solution. Second, its $O(Kn^2)$ quadratic time complexity is too slow
for use on large data sets. In this
paper we propose a new algorithm that fixes both of these issues.

\subsection{Contributions and organization}

We begin by discussing previous research into pruning techniques for
solving unconstrained changepoint detection problems
(Section~\ref{sec:related}), then state the constrained optimization
problems (Section~\ref{sec:models}). Our main contribution is
Section~\ref{sec:algorithms}, which generalizes the functional pruning
technique of \citet{pruned-dp}, thus providing a new Generalized
Pruned Dynamic Progamming Algorithm (GPDPA) for solving a class of
constrained changepoint detection problems. We show that the GPDPA
achieves state-of-the-art speed and accuracy in genomic data with
several different labeled patterns (Section~\ref{sec:results}), then
conclude by discussing the significance of our contributions
(Section~\ref{sec:discussion}).

\begin{table*}[b!]
  \centering
  \begin{tabular}{r|c|c}
    & No pruning & Functional pruning \\
    \hline
    Unconstrained & Dynamic Programming Algorithm (DPA) & Pruned DPA (PDPA) \\
    & Optimal solution, $O(Kn^2)$ time & Optimal solution, $O(Kn\log n)$ time\\
    & \citet{segment-neighborhood}     & \citet{pruned-dp, johnson} \\
    \hline
    Up-down constrained & Constrained DPA (CDPA) & Generalized Pruned DPA (GPDPA) \\
    & Sub-optimal solution, $O(Kn^2)$ time & Optimal solution, $O(Kn\log n)$ time\\
    & \citet{HOCKING-PeakSeg} & \textbf{This paper} \\
    \hline
  \end{tabular}
  \caption{Our contribution is 
the Generalized Pruned Dynamic Programming Algorithm (GPDPA), 
 which uses a functional pruning technique 
    to compute the constrained optimal $K-1$ changepoints 
in a sequence of $n$ data, in $O(K n\log n)$ time on average.}
\label{tab:contribution}
\end{table*}

\section{Related work}
\label{sec:related}

There are many efficient algorithms available for computing the
optimal $K-1$ changepoints in $n$ data
points. \citet{segment-neighborhood} proposed an $O(K n^2)$ algorithm
for computing the sequence of models with $1,\dots,K$ segments.
\citet{optimal-partitioning} consider a related approach, which
introduces a penalty for each changepoint, rather than fixing the
number of changepoints. Their $O(n^2)$ algorithm computes the single
model for a given penalty constant $\lambda$. Both of these algorithms
recover the optimal solution, and follow from using dynamic
programming updates \citep{bellman} to recursively compute the maximum
likelihood from 1 to $n$ data points. Alternatively there are methods
which are computationally faster but are not guaranteed to find the
optimal segmentation. The most popular of these is the binary
segmentation algorithm which has $O(Kn)$ worst-case time complexity
\citep{binary-segmentation}. An L1 relaxation of this problem is known
as the fused lasso signal approximator, for which efficient solvers
also exist \citep{flsa}.

Several pruning methods have been recently proposed in order to reduce
time complexity, while maintaining optimality.  \citet{pruned-dp} and
\citet{phd-johnson} independently discovered a functional pruning
technique, which results in algorithms with $O(n\log n)$ average time
complexity. \citet{pelt} proposed an inequality pruning technique,
which results in an algorithm with average time complexity from $O(n)$
to $O(n^2)$, depending on the number of changes. \citet{fpop} provides
a clear discussion on the differences between the two pruning
techniques.

All algorithms discussed thus far are for solving problems with no
constraints between adjacent segment mean parameters, but there are
many examples of constrained changepoint detection models. Rather
than searching all possible changepoints and likelihood parameters,
the idea is to use a constraint in order to search a smaller, more
interpretable model space. For example, \citet{haiminen2008algorithms}
propose an $O(Kn^2)$ algorithm for unimodal regression, which enforces
no up changes after the first down change. \citet{HOCKING-PeakSeg}
proposed an $O(Kn^2)$ algorithm for peak detection, which enforces a
down change after each up change, and vice versa.

Isotonic regression is another example of a constrained changepoint
detection model. There is no limit on the number of segments $K$, but
the segment means are constrained to be non-decreasing. This problem
can be solved in $O(n)$ time using the pool-adjacent-violators
algorithm \citep{mair2009isotone}, or in $O(n\log n)$ time using a
dynamic programming algorithm \citep{isotonic-dp}. An L1 relaxation of
this problem is known as nearly-isotonic regression
\citep{tibshirani2011nearly}. A problem known as reduced isotonic
regression occurs by imposing an additional constraint of $K$ segments
\citep{reduced-monotonic-regression}. The techniques for solving this
problem lead to sub-quadratic time algorithms
\citep{hardwick2014optimal}, but do not generalize to other kinds of
constraints (such as unimodal regression or peak detection).

Our contribution in this paper is proving that the functional pruning
technique can be generalized to constrained changepoint models
(Table~\ref{tab:contribution}). Our resulting Generalized Pruned
Dynamic Programming Algorithm (GPDPA) enjoys $O(Kn\log n)$ time
complexity, and works for any changepoint model with affine
constraints between adjacent segment means (including isotonic
regression, unimodal regression, and peak detection).


\section{Isotonic regression and changepoint models}
\label{sec:models}

Although our proposed algorithm can solve many constrained changepoint
detection problems (Section~\ref{sec:general}), we will simplify our
discussion by emphasizing the isotonic regression model. 

\subsection{Classical isotonic regression}

The classical isotonic regression model is defined as the most likely
sequence of non-decreasing segment means. More
precisely, assume that the data $\mathbf y\in\RR^n$ are a realization
of a probability distribution with mean parameter $\mathbf m\in\RR^n$. For
example, assuming $y_t \sim \mathcal N(m_t, \sigma^2)$ and performing
maximum likelihood inference results in a convex minimization problem
with affine constraints,
\begin{align}
  \label{eq:isotonic}
  \minimize_{\mathbf m\in\RR^n} &\ \ 
  \sum_{t=1}^n \ell(y_t, m_t)\\
  \text{subject to} & \ \ m_t \leq m_{t+1},\, \forall t<n.
  \nonumber
\end{align}
The convex loss function $\ell:\RR\times \RR\rightarrow\RR$ in the
case of the Gaussian likelihood is the square loss
$\ell(y, m) = (y-m)^2$. This optimization problem (\ref{eq:isotonic})
is referred to as isotonic regression, and can be efficiently solved
in $O(n)$ time using the Pool-Adjacent-Violators Algorithm (PAVA)
\citep{isotonic-unifying}.

Since isotonic regression imposes no limit on the number of
changepoints ($m_t < m_{t+1}$), it tends to overfit. For example,
consider the toy data set $\mathbf y= \left[
\begin{array}{cccccc}
  2 & 5 & 30 & 34 & 600 & 621
\end{array}
\right] \in\RR^6$. Because these data are strictly increasing, the
isotonic regression (\ref{eq:isotonic}) solution is the trivial model
$m_t=y_t$. However, these data contain only two large changes. To
recover these changes, we could instead use the segment neighborhood
model, which we discuss in the next section.

\subsection{Segment neighborhood changepoint model}

The segment neighborhood model of \citet{segment-neighborhood} uses
the same cost function as isotonic regression, but a different
constraint set. There is no constraint on the direction of changes,
but there must be exactly $K\leq n$ distinct segments ($K-1$ changes).
\begin{align}
  \label{eq:optimal_segment_neighborhood}
  \minimize_{\mathbf m\in\RR^n} &\ \ 
  \sum_{t=1}^n \ell(y_t, m_t)\\
  \text{subject to} &\ \  \sum_{t=1}^{n-1} I(m_t \neq m_{t+1}) = K-1.
  \nonumber
\end{align}
This optimization problem is non-convex since the model complexity is
the number of changepoints, measured via the non-convex indicator
function $I$. Nonetheless, the optimal solution can be computed in
$O(K n^2)$ time using the standard dynamic programming algorithm
\citep{segment-neighborhood}. By exploiting the structure of the
convex loss function $\ell$, the pruned dynamic programming algorithm
of \citet{pruned-dp} computes the same optimal solution in faster
$O(K n \log n)$ time.

Unlike isotonic regression, the segment neighborhood model does not
constrain the direction of the changes. Thus, for some data sets
$\mathbf y$, the segment neighborhood model may recover a change down
($m_t > m_{t+1}$). For applications where isotonic regression is used,
it would be desirable to compute a model with $K$ non-decreasing
segment means. This results in the reduced isotonic regression
problem, which we introduce in the next section.

\subsection{Reduced isotonic regression}

The idea of fitting a non-decreasing function with a limited number of
changepoints has been previously described as reduced isotonic
regression \citep{reduced-monotonic-regression}. Combining the
constraints of the isotonic regression (\ref{eq:isotonic}) and segment
neighborhood (\ref{eq:optimal_segment_neighborhood}) problems gives
\begin{align}
  \label{eq:reduced}
  \minimize_{\mathbf m\in\RR^n} &\ \ 
  \sum_{t=1}^n \ell(y_t, m_t)\\
  \text{subject to} &\ \  \sum_{t=1}^{n-1} I(m_t \neq m_{t+1}) = K-1,
  \nonumber\\
  &\ \  m_t \leq m_{t+1},\, \forall t<n.
  \nonumber 
\end{align}
In the next section, we explain how functional pruning
can be used for solving this and related changepoint problems.

\newcommand{\FCC}{C}
\newcommand{\M}{\mathcal{M}}
\section{Functional 
pruning algorithms for constrained
  changepoint models}
\label{sec:algorithms}


We begin by discussing an algorithm for solving the reduced isotonic
regression problem, then explain how the algorithm
generalizes to other constrained changepoint problems.

\subsection{Equivalent optimization space}

The reduced isotonic regression problem (\ref{eq:reduced}) has $n$
segment mean variables $m_t$, one for each data point $t$. To derive
our algorithm, we re-write the problem in terms of the mean
$u_k\in\RR$ and endpoint $t_k\in\{1,\dots,n\}$ for each
segment $k\in\{1,\dots, K\}$.
\begin{definition}[Reduced isotonic regression optimization space]
\label{def:Ibar}
  Let $(\mathbf u, \mathbf t)\in{\mathcal I}^n_K$ be the set of
  non-decreasing segment means $u_1\leq\cdots\leq u_K$ and
  increasing changepoint indices $0=t_0<t_1<\cdots<t_{K-1}<t_K=n$.
\end{definition}
Each segment mean $u_k$ is assigned to data points
$\tau\in(t_{k-1},t_k]\subset\{1,\dots,n\}$, resulting in the following
cost for each segment $k\in\{1, \dots, K\}$, 
\begin{equation}
  \label{eq:h}
  h_{t_{k-1}, t_k}(u_k) = \sum_{\tau=t_{k-1}+1}^{t_k} \ell(y_\tau, u_k).
\end{equation}
The reduced isotonic regression problem can be equivalently written as
\begin{equation}
  \label{eq:isotonic_ut}
  \minimize_{(\mathbf u, \mathbf t)\in{\mathcal I}^n_K}
  \sum_{k=1}^K
  h_{t_{k-1}, t_k}(u_k)
\end{equation}
Rather than explicitly summing over data points $i$ as in problem
(\ref{eq:reduced}), this problem uses the equivalent sum over segments $k$.

\begin{figure*}[t!]
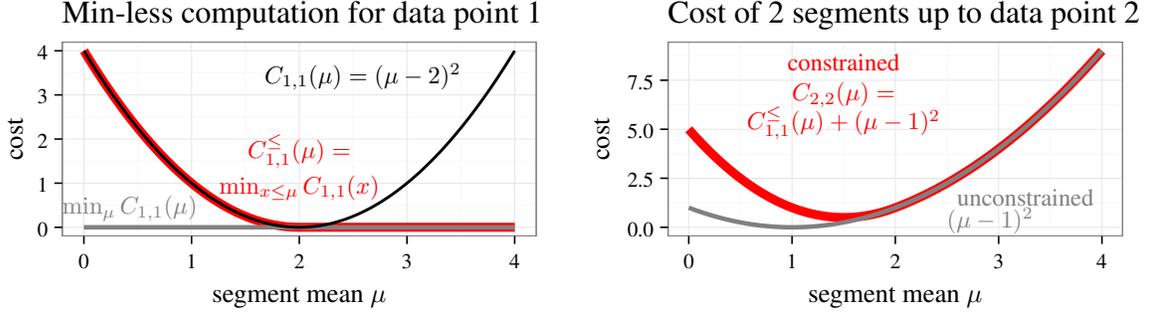

  \centering
  \input{figure-compare-unconstrained}
  \input{figure-compare-cost}
  \vskip -0.5cm
  \caption{Comparison of previous unconstrained algorithm
    (\textcolor{Min}{grey}) with new algorithm that constrains segment
    means to be non-decreasing (\textcolor{Ckt}{red}), for the toy data
    set $\mathbf y= [ 2, 1, 0, 4 ] \in\RR^4$ and the square
    loss. \textbf{Left:} rather than computing the unconstrained
    minimum (constant grey function), the new algorithm computes the
    min-less operator (red), resulting in a larger cost when the
    segment mean is less than the first data point ($\mu <
    2$). \textbf{Right:} adding the cost of the second data point
    $(\mu-1)^2$ and minimizing yields equal means $u_1=u_2=1.5$ for
    the constrained model and decreasing means $u_1=2,\, u_2=1$ for
    the unconstrained model.}
  \label{fig:compare-unconstrained}
\end{figure*}




\subsection{Dynamic programming update rules}
\label{sec:dyn-prog}
Optimization problem (\ref{eq:isotonic_ut}) has $K$ segment mean
variables $u_k$ and $K-1$ changepoint index variables $t_k$. Minimizing over all
variables except the last segment mean $u_K$ results in the following
definition of the optimal cost.



\begin{definition}[Optimal cost with last segment mean $\mu$]
\label{def:fcc}
  Let $\FCC_{K,n}(\mu)$ be the optimal cost of the segmentation
  with $K$ segments, up to data point $n$, with last segment mean
  $\mu$:
\begin{equation}
\FCC_{K,n}(\mu) = \min_{(\mathbf u, \mathbf t)\in{\mathcal I}^n_K \ | \ u_K = \mu} \
  \left\{ \sum_{k=1}^K
  h_{t_{k-1}, t_k}(u_k) \right\}.
\end{equation}
\end{definition}

As in the PDPA of \citet{pruned-dp}, our proposed dynamic programming
algorithm uses an exact representation of the
$C_{k,t}:\RR\rightarrow\RR$ cost functions. Each $C_{k,t}(\mu)$ is
represented as a piecewise function on intervals of $\mu$. This is
implemented as a linked list of FunctionPiece objects in C++ (for
details see Section~\ref{sec:pseudocode}). Each element of the linked list
represents a convex function piece, and implementation details depend
on the choice of the loss function $\ell$ (for an example using the
square loss see Section~\ref{sec:example-comparison}).

In the original unconstrained PDPA, computing the $C_{k,t}(u_k)$
function requires taking the minimum of $C_{k,t-1}(u_k)$ (a function
of the last segment mean $u_k$) and
$\hat C_{k-1,t-1} = \min_{u_{k-1}} C_{k-1,t-1}(u_{k-1})$ (the constant
loss resulting from an unconstrained minimization with respect to the
previous segment mean $u_{k-1}$). The main novelty of our paper is the
discovery that this update can also be computed efficiently for
constrained problems. For example in reduced isotonic regression the second
term is no longer a constant, but instead a function of $u_k$,
$C_{k-1,t-1}^{\leq}(u_k) = \min_{u_{k-1}\leq u_k}
C_{k-1,t-1}(u_{k-1})$, which we refer to as the min-less operator
(Figure~\ref{fig:compare-unconstrained}, left).

\begin{definition}[Min-less operator]
\label{def:min-less}
  Given any real-valued function $f:\RR\rightarrow\RR$, we define the min-less
  operator of that function as $f^\leq(\mu)=\min_{x\leq \mu} f(x)$.
\end{definition}

The min-less operator is used in the following Theorem, which states
the update rules used in our proposed algorithm.

\begin{theorem}[Generalized Pruned Dynamic Programming Algorithm
  for reduced isotonic regression]
\label{thm:gpdpa}
  The optimal cost functions $C_{k,t}$ can be recursively computed
  using the following update rules.
\begin{enumerate}
\item For $k=1$ we have
$\FCC_{1,1}(\mu)=\ell(y_1,\mu)$, and for the other data
  points $t>1$ we have
\begin{equation}
\FCC_{1,t}(\mu)=\FCC_{1,t-1}(\mu)+\ell(y_t,\mu)
\end{equation}
\item For $k>1$ and $t=k$ we have
\begin{equation}
  \FCC_{k,k}(\mu)=\ell(y_k, \mu)+\FCC_{k-1,k-1}^\leq(\mu)
\end{equation}
\item In all other cases we have
  \begin{equation}
  \FCC_{k,t}(\mu)=\ell(y_t,\mu)+
  \min\{
  \FCC_{k-1,t-1}^\leq(\mu),\,
  \FCC_{k,t-1}(\mu)
  \}.
  \end{equation}
\end{enumerate}
\end{theorem}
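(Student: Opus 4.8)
My plan is to derive all three update rules from a single decomposition identity obtained by conditioning on the location of the last changepoint. Write $s = t_{k-1}$ for the index of the last changepoint, so that the final segment covers the data points $s+1,\dots,t$ and carries mean $u_k=\mu$. For a valid element of ${\mathcal I}^t_k$ we need $k-1$ earlier segments packed into $\{1,\dots,s\}$, which forces $s\ge k-1$, and the last segment must be nonempty, which forces $s\le t-1$. The first step is to establish, for every $k\ge 2$ and $t\ge k$, the master recursion
\begin{equation*}
\FCC_{k,t}(\mu) = \min_{s=k-1}^{t-1}\left\{ h_{s,t}(\mu) + \FCC_{k-1,s}^{\leq}(\mu) \right\}.
\end{equation*}

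To derive this I would start from Definition~\ref{def:fcc}, split the objective $\sum_{k'=1}^{k} h_{t_{k'-1},t_{k'}}(u_{k'})$ into the cost $h_{s,t}(\mu)$ of the last segment plus the cost of the first $k-1$ segments, and then minimize in two stages: first over the prefix $(\mathbf u',\mathbf t')\in{\mathcal I}^s_{k-1}$ for each fixed $s$, and then over $s$. The crucial point is that the constraints defining ${\mathcal I}^t_k$ separate across the cut at $s$: the monotonicity chain $u_1\le\cdots\le u_{k-1}\le u_k$, restricted to the prefix, is exactly the chain defining ${\mathcal I}^s_{k-1}$, and the only constraint coupling the prefix to the fixed last mean $\mu$ is the single inequality $u_{k-1}\le\mu$. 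Hence the inner minimization equals $\min_{u_{k-1}\le\mu}\FCC_{k-1,s}(u_{k-1})$, which by Definition~\ref{def:min-less} is precisely $\FCC_{k-1,s}^{\leq}(\mu)$. I expect this decoupling to be the main obstacle, since the whole construction rests on it: one must check carefully that no constraint other than $u_{k-1}\le\mu$ links the two pieces, and that the min-less operator correctly absorbs this lone inequality.

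Given the master recursion, each rule follows by elementary manipulation. Rule 1 ($k=1$) is immediate: a single segment forces mean $\mu$ on all of $1,\dots,t$, so $\FCC_{1,t}(\mu)=\sum_{\tau=1}^{t}\ell(y_\tau,\mu)$, and peeling off the last term gives the stated recursion with base case $\FCC_{1,1}(\mu)=\ell(y_1,\mu)$. For Rule 2 ($k>1$, $t=k$) the index range collapses to the single value $s=k-1$, and $h_{k-1,k}(\mu)=\ell(y_k,\mu)$, yielding $\FCC_{k,k}(\mu)=\ell(y_k,\mu)+\FCC_{k-1,k-1}^{\leq}(\mu)$. For Rule 3 ($k>1$, $t>k$) I would split the minimum at $s=t-1$: the $s=t-1$ term contributes $\ell(y_t,\mu)+\FCC_{k-1,t-1}^{\leq}(\mu)$, while for $s\le t-2$ I use $h_{s,t}(\mu)=\ell(y_t,\mu)+h_{s,t-1}(\mu)$ to factor out $\ell(y_t,\mu)$; the remaining minimum over $s=k-1,\dots,t-2$ is exactly the master recursion evaluated at $t-1$ (valid since $t-1\ge k$), namely $\FCC_{k,t-1}(\mu)$. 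Combining the two pieces and pulling $\ell(y_t,\mu)$ outside the minimum gives $\FCC_{k,t}(\mu)=\ell(y_t,\mu)+\min\{\FCC_{k-1,t-1}^{\leq}(\mu),\FCC_{k,t-1}(\mu)\}$, as claimed.
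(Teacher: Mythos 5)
Your proof is correct, and it takes a genuinely different route from the paper's. You first establish the full last-changepoint decomposition
$\FCC_{k,t}(\mu)=\min_{s=k-1}^{t-1}\{h_{s,t}(\mu)+\FCC^{\leq}_{k-1,s}(\mu)\}$,
justified by the observation that the constraints of ${\mathcal I}^t_k$ decouple at the cut $s=t_{k-1}$, the lone coupling inequality $u_{k-1}\leq\mu$ being absorbed exactly by the min-less operator; all three rules then follow algebraically, with rule 3 obtained by peeling off the $s=t-1$ term and recognizing the rest as the master recursion at $(k,t-1)$. The paper instead proves case 3 by two inequalities: the upper bound by constructing feasible segmentations from the sub-solutions, and the lower bound by contradiction, taking an optimal segmentation in ${\mathcal I}^{t+1}_K$ and case-splitting on whether its last changepoint is at $t$ (Scenario 2) or earlier (Scenario 1). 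The two arguments encode the same combinatorial fact, but yours buys uniformity (one identity covers cases 1--3) and makes explicit the decoupling step the paper leaves implicit: in the paper's Scenario 2 the extracted prefix has last mean $u_{K-1}\leq u$, so the contradiction is properly with the min-less value $\FCC^{\leq}_{K-1,t}(u)$ rather than $\FCC_{K-1,t}(u)$ as written, and your absorption argument supplies precisely that justification. The paper's version is shorter and avoids the intermediate recursion; yours generalizes more transparently, since replacing $u_{k-1}\leq\mu$ by $g(u_{k-1},\mu)\leq 0$ immediately yields the general update rule (\ref{eq:general_dp}) of Section~\ref{sec:general}.
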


\begin{proof}
  Case 1 and 2 follow from Definition~\ref{def:fcc}, and there is a proof for case 3 in Section~\ref{sec:proof}.

\end{proof}

The dynamic programming algorithm requires computing $O(Kn)$ cost
functions $\FCC_{k,t}$. As in the original pruned dynamic programming
algorithm, the time complexity of the algorithm is $O(K n I)$ where
$I$ is the number of intervals (convex function pieces; candidate
changepoints) that are used to represent the cost functions. The
theoretical maximum number of intervals is $I=O(n)$, implying a time
complexity of $O(K n^2)$ \citep{pruned-dp-new}. However, this maximum
is only achieved in pathological synthetic data sets, such as a
monotonic increasing data sequence. The average number of intervals in
real data sets is empirically $I=O(\log n)$, as we will show in
Section~\ref{sec:results_time}. Thus the average time complexity of
the algorithm is $O(K n \log n)$.

\subsection{Example and comparison with unconstrained case}
\label{sec:example-comparison}

To clarify the discussion, consider the 
toy data set $\mathbf y= \left[
\begin{array}{cccccc}
  2 & 1 & 0 & 4
\end{array}
\right] \in\RR^4$ and the square loss $\ell(y,\mu)=(y-\mu)^2$. The first
step of the algorithm is to compute the minimum and the maximum of the
data (0,4) in order to bound the possible values of the segment
mean $\mu$. Then the algorithm computes the optimal cost in $k=1$ segment up
to data point $t=1$:
\begin{equation}
  \FCC_{1,1}(\mu) = (2-\mu)^2=4 - 4\mu + \mu^2\text{ (for $\mu\in[0,4]$)}
\end{equation}
This function can be stored for all values of $\mu$ via the three
real-valued coefficients ($\text{constant}=4$, $\text{linear}=-4$,
$\text{quadratic}=1$). To compute the optimal cost in $K=2$ segments,
we first compute the min-less operator (red curve on left of
Figure~\ref{fig:compare-unconstrained}),
\begin{equation}
  \FCC_{1,1}^\leq(\mu) =
  \begin{cases}
    4 - 4\mu + \mu^2 &\text{ if }\mu\in[0,2],\, \mu'=\mu,\\
    0 + 0\mu + 0\mu^2 & \text{ if }\mu\in[2,4],\,  \mu'=2.
  \end{cases}
\end{equation}
This function can be stored as a list of two
intervals of $\mu$ values, each with associated real-valued
coefficients. In addition, to facilitate recovery of the optimal
parameters, we store the previous segment mean $\mu'$ and endpoint
(not shown). Note that $\mu'=\mu$ means that the equality constraint
is active ($u_1=u_2$).

By adding the first min-less function $\FCC_{1,1}^\leq(\mu)$ to the
cost of the second data point $(\mu-2)^2$ we obtain the optimal cost in $K=2$
segments up to data point $t=2$,
\begin{equation}
  \FCC_{2,2}(\mu) = 
  \begin{cases}
    5 - 6\mu + 2\mu^2 &\text{ if }\mu\in[0,2],\,  \mu'=\mu,\\
    1 - 2\mu + 1\mu^2 &\text{ if }\mu\in[2,4],\,  \mu'=2.
  \end{cases}
\end{equation}
Note that the minimum of this function is achieved at $\mu=1.5$ which
occurs in the first of the two function pieces (red curve on right of
Figure~\ref{fig:compare-unconstrained}), with an equality constraint
active. This implies the optimal model up to data point $t=2$ with
$k=2$ non-decreasing segment means actually has no change
($u_1=u_2=1.5$). In contrast, the minimum of the cost computed by the
unconstrained algorithm is at $u_2=1$ (grey curve on right of
Figure~\ref{fig:compare-unconstrained}), resulting in a change down
from $u_1=2$.

\subsection{The PeakSeg up-down constraint}
\label{sec:PeakSeg}

The PeakSeg model described by \citet{HOCKING-PeakSeg} is the most
likely segmentation where the first change is up, all up changes are
followed by down changes, and all down changes are followed by up
changes. More precisely, the constrained optimization problem can be
stated as
\begin{align}
  \label{eq:PeakSeg}
  \minimize_{
        \substack{\mathbf u\in\RR^K \\
    0=t_0<t_1<\cdots<t_{K-1}<t_K=n
}
    } &\ \ 
  \sum_{k=1}^K h_{t_{k-1}, t_k}(u_k)\\
      \text{subject to \hskip 0.9cm} &\ \ u_{k-1} \leq u_k\ \forall k\in\{2,4,\dots\},
  \nonumber\\
  &\ \ u_{k-1} \geq u_k\ \forall k\in\{3,5,\dots\}.
  \nonumber
\nonumber
\end{align}

\begin{figure*}[t!]
  \centering
  \input{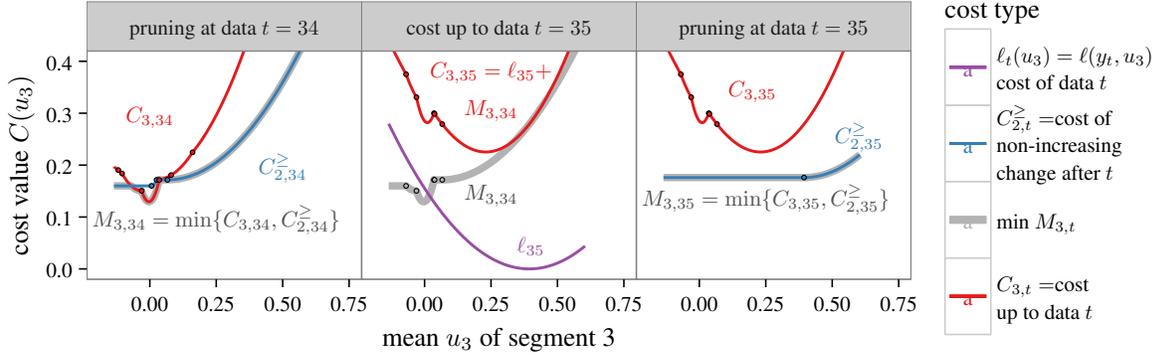}
\vskip -0.5cm
  \caption{
    Demonstration of GPDPA for the PeakSeg model~(\ref{eq:PeakSeg})
    with $k=3$ segments. Cost functions are stored as piecewise
    functions on intervals (black dots show limits between function
    pieces). \textbf{Left:} the min \textcolor{Min}{$M_{3,34}$} is the
    minimum of two functions: \textcolor{MinMore}{$C^{\geq}_{2,34}$}
    is the cost if the second segment ends at data point $t=34$ (the
    min-more operator forces a non-increasing change after), and
    \textcolor{Ckt}{$C_{3,34}$} is the cost if the second segment ends
    before that. \textbf{Middle:} the cost \textcolor{Ckt}{$C_{3,35}$}
    is the sum of the min \textcolor{Min}{$M_{3,34}$} and the cost of
    the next data point \textcolor{Data}{$\ell_{35}$}. \textbf{Right:}
    in the next step, all previously considered changepoints are
    pruned (cost \textcolor{Ckt}{$C_{3,35}$}), since the model with a the second
    segment ending at data point $t=35$ is always less costly
    (\textcolor{MinMore}{$C^{\geq}_{2,35}$}).  }
  \label{fig:min-envelope}
\end{figure*}

Our proposed Generalized Pruned Dynamic Programming Algorithm (GPDPA)
can be used to solve the PeakSeg problem. The initialization $k=1$ is
the same as in the reduced isotonic regression solver
(Section~\ref{sec:dyn-prog}). The dynamic programming updates for even
$k\in\{2, 4, \dots\}$ are also the same. However, to constrain non-increasing
changes, the updates for odd $k\in\{3, 5, \dots\}$ are
\begin{equation}
  \FCC_{k,t}(\mu) = \ell(y_t, \mu) + \min\{
  \FCC_{k-1,t-1}^\geq(\mu),\, \FCC_{k,t-1}(\mu)
  \},
\end{equation}
where the min-more operator is defined for any function $f:\RR\rightarrow\RR$ as
$f^\geq(\mu) = \min_{x\geq \mu} f(x)$. Figure~\ref{fig:min-envelope}
shows the geometric interpretation of the min-more operator, along
with an example of how the $\min\{\}$ operation performs pruning.
We implemented this algorithm using the Poisson loss
$\ell(y, \mu) = \mu - y\log \mu$, since our application in
Section~\ref{sec:results-chip-seq} is on count data
$y\in\ZZ_+ = \{0, 1, 2, \dots\}$.
We implemented this algorithm in C++, and our free/open-source code is
available as the PeakSegPDPA function in the coseg R package for
constrained optimal segmentation
(\url{https://github.com/tdhock/coseg}). Implementation details can be
found in Section~\ref{sec:pseudocode}.

\subsection{General affine inequality constraints
  between adjacent segment means}
\label{sec:general}

In this section we briefly discuss how our proposed Generalized Pruned
Dynamic Programming Algorithm (GPDPA) can be used to solve any
optimization problem with affine inequality constraints
between adjacent segment means. For each change $k\in\{1,\dots,K-1\}$,
let $a_k,b_k,c_k\in\RR$ be arbitrary coefficients that define affine
functions $g_k(u_k, u_{k+1})=a_k u_k + b_k u_{k+1} + c_k$. The
changepoint detection problem with general affine constraints is
\begin{align}
  \label{eq:min_general_affine_inequality}
  \minimize_{
    \substack{
    \mathbf u\in\RR^K\\
0=t_0<t_1<\cdots<t_{K-1}<t_K=n
}
    } &\ \ 
  \sum_{k=1}^K h_{t_{k-1}, t_k}(u_k)\\
  \text{subject to \hskip 0.9cm} &\ \  \forall k\in\{1,\dots,K-1\},
  \nonumber\\
&\ \ g_k(u_k, u_{k+1})\leq 0.\nonumber                        
\nonumber
\end{align}


Some examples of models that are special cases:
\begin{enumerate}
\item If we take all $a_k,b_k,c_k=0$ then the constraints are
  trivially satisfied, we
  recover the unconstrained segment neighborhood problem
  (\ref{eq:optimal_segment_neighborhood}).
\item If we take all $a_{k} =1$, $b_{k}=-1$ and $c_{k} = 0$ we recover
  the reduced isotonic regression problem
  (\ref{eq:isotonic_ut}).
\item For the PeakSeg problem (\ref{eq:PeakSeg}),
  we take all $c_{k} = 0$. For odd $k\in\{1,3,\dots\}$ we take
  $a_{k} =1$, $b_{k}=-1$ and for even $k\in\{2,4,\dots\}$ we take
  $a_{k} =-1$, $b_{k}=1$.
\end{enumerate}
To solve these problems, we need to compute the analog of the
min-less/more operator, which we call the constrained minimization
operator. For any cost function $f:\RR\rightarrow\RR$ and constraint
function $g:\RR\times\RR\rightarrow\RR$, we define the constrained
minimization operator $f^g:\RR\rightarrow\RR$ as
\begin{equation}
  \label{eq:constrained-min-operator}
  f^g(u_{k}) = \min_{u_{k-1} : g(u_{k-1}, u_{k})\leq 0} f(u_{k-1}).
\end{equation}
When $g$ is affine, the constrained minimization operator is either
non-decreasing or non-increasing. In this case it can be computed
using a simple algorithm that scans the piecewise function $f$ either
from left to right or right to left. When a local minimum is found,
its value is recorded, and a constant function piece is added (for
details see pseudocode for MinLess algorithm in
Section~\ref{sec:MinLess}). The constrained minimization operator is
used in the following general dynamic programming update rule which
can be used to compute the solution to
(\ref{eq:min_general_affine_inequality})
\begin{equation}
  \label{eq:general_dp}
  \FCC_{k,t}(\mu) = \ell(y_t,\mu) + \min\{
  \FCC_{k,t-1}(\mu),\,
  \FCC_{k-1,t-1}^{g_{k-1}}(\mu)
  \}.
\end{equation}
We note that this update rule is valid for constraint functions $g$
more general than affine functions. However, the
closed-form computation of the constrained minimization operator
(\ref{eq:constrained-min-operator}) would possibly be much more
difficult for these more general constraint functions (e.g. quadratic
constraint functions).

\section{Results on peak detection in ChIP-seq data}
\label{sec:results-chip-seq}
\label{sec:results}

The real data analysis problem that motivates this work is the
detection of peaks in ChIP-seq data \citep{practical}, which are
typically represented as a vector of non-negative counts
$\mathbf y\in\ZZ_+^n$ of aligned sequence reads for $n$ continguous
bases in a genome. Data sizes are between $n=10^5$ (maximum of the
benchmark we consider) and $n=10^8$ (largest region with no gaps in
the human genome hg19). A peak detector can be represented
as a function $c(\mathbf y)\in\{0,1\}^n$ for binary classification at
every base position. The positive class is peaks (genomic regions with
large values, representing protein binding or modification) and the
negative class is background noise (small values).

In the supervised learning framework of \citet{HOCKING2016-chipseq}, a
data set consists of $m$ count data vectors
$\mathbf y_1,\dots,\mathbf y_m$ along with labels $L_1,\dots, L_m$
that identify regions with and without peaks. Briefly, the number of
errors $E[c(\mathbf y_i), L_i]$ is the total of false positives
(negative labels with a predicted peak) plus false negatives (positive
labels with no predicted peak). The benchmark consists of seven
histone ChIP-seq data sets, each with a different peak pattern
(experiment type, labeler, cell types). The goal in each data set is
to learn the pattern encoded in the labels, and find a classifier $c$
that minimizes the total number of incorrectly predicted labels in a
held-out test set:
\begin{equation}
  \label{eq:learn}
  \minimize_c
  \sum_{i=1}^m 
  E\left[
    c(\mathbf y_i), L_i
  \right].
\end{equation}

\citet{HOCKING-PeakSeg} proposed a constrained dynamic programming
algorithm (CDPA) to approximately compute the optimal changepoints,
subject to the PeakSeg up-down constraint
(Section~\ref{sec:PeakSeg}). The CDPA has been shown to achieve
state-of-the-art peak detection accuracy, by classifying even-numbered
segments $k$ as peaks, and odd-numbered segments $k$ as background
noise. However, its quadratic $O(Kn^2)$ time complexity makes it too
slow to run on large ChIP-seq data sets.

\begin{figure*}[t!]
  \centering
  \parbox{0.49\textwidth}{
\begin{tikzpicture}[x=1pt,y=1pt]
\definecolor{fillColor}{RGB}{255,255,255}
\path[use as bounding box,fill=fillColor,fill opacity=0.00] (0,0) rectangle (238.49,130.09);
\begin{scope}
\path[clip] (  0.00,  0.00) rectangle (238.49,130.09);
\definecolor{drawColor}{RGB}{255,255,255}
\definecolor{fillColor}{RGB}{255,255,255}

\path[draw=drawColor,line width= 0.6pt,line join=round,line cap=round,fill=fillColor] (  0.00,  0.00) rectangle (238.49,130.09);
\end{scope}
\begin{scope}
\path[clip] ( 53.92, 33.48) rectangle (226.49,124.09);
\definecolor{fillColor}{RGB}{255,255,255}

\path[fill=fillColor] ( 53.92, 33.48) rectangle (226.49,124.09);
\definecolor{drawColor}{gray}{0.98}

\path[draw=drawColor,line width= 0.6pt,line join=round] ( 53.92, 76.31) --
	(226.49, 76.31);

\path[draw=drawColor,line width= 0.6pt,line join=round] ( 53.92,110.22) --
	(226.49,110.22);

\path[draw=drawColor,line width= 0.6pt,line join=round] ( 60.51, 33.48) --
	( 60.51,124.09);

\path[draw=drawColor,line width= 0.6pt,line join=round] ( 84.23, 33.48) --
	( 84.23,124.09);

\path[draw=drawColor,line width= 0.6pt,line join=round] (131.67, 33.48) --
	(131.67,124.09);

\path[draw=drawColor,line width= 0.6pt,line join=round] (106.52, 33.48) --
	(106.52,124.09);

\path[draw=drawColor,line width= 0.6pt,line join=round] (140.21, 33.48) --
	(140.21,124.09);
\definecolor{drawColor}{gray}{0.90}

\path[draw=drawColor,line width= 0.2pt,line join=round] ( 53.92, 52.15) --
	(226.49, 52.15);

\path[draw=drawColor,line width= 0.2pt,line join=round] ( 53.92,100.48) --
	(226.49,100.48);

\path[draw=drawColor,line width= 0.2pt,line join=round] ( 53.92,119.97) --
	(226.49,119.97);

\path[draw=drawColor,line width= 0.2pt,line join=round] (107.95, 33.48) --
	(107.95,124.09);

\path[draw=drawColor,line width= 0.2pt,line join=round] (155.39, 33.48) --
	(155.39,124.09);

\path[draw=drawColor,line width= 0.2pt,line join=round] ( 57.64, 33.48) --
	( 57.64,124.09);

\path[draw=drawColor,line width= 0.2pt,line join=round] (222.78, 33.48) --
	(222.78,124.09);
\definecolor{drawColor}{RGB}{0,0,0}

\node[text=drawColor,anchor=base,inner sep=0pt, outer sep=0pt, scale=  0.92] at (107.95,111.18) {max};

\node[text=drawColor,anchor=base,inner sep=0pt, outer sep=0pt, scale=  0.92] at (155.39, 37.62) {median, inter-quartile range};

\path[draw=drawColor,line width= 0.6pt,line join=round] ( 61.77, 80.17) --
	( 70.02, 87.55) --
	( 78.28, 90.11) --
	( 86.54, 83.25) --
	( 94.79, 85.07) --
	(103.05, 96.06) --
	(111.31,102.48) --
	(119.56,108.43) --
	(127.82,111.50) --
	(136.08,112.94) --
	(144.33,115.55) --
	(152.59,119.97) --
	(160.85,113.96) --
	(169.11,112.35) --
	(177.36,116.16) --
	(185.62,109.55) --
	(193.88,100.06) --
	(202.13, 90.11) --
	(210.39, 89.05) --
	(218.65, 88.31);
\definecolor{fillColor}{RGB}{0,0,0}

\path[fill=fillColor,fill opacity=0.50] ( 61.77, 52.15) --
	( 70.02, 55.08) --
	( 78.28, 49.93) --
	( 86.54, 51.07) --
	( 94.79, 54.15) --
	(103.05, 57.65) --
	(111.31, 60.66) --
	(119.56, 62.01) --
	(127.82, 62.01) --
	(136.08, 62.01) --
	(144.33, 64.48) --
	(152.59, 65.62) --
	(160.85, 66.70) --
	(169.11, 64.48) --
	(177.36, 64.48) --
	(185.62, 64.48) --
	(193.88, 64.48) --
	(202.13, 65.62) --
	(210.39, 65.62) --
	(218.65, 65.62) --
	(218.65, 57.65) --
	(210.39, 55.97) --
	(202.13, 55.97) --
	(193.88, 55.97) --
	(185.62, 55.97) --
	(177.36, 54.15) --
	(169.11, 54.15) --
	(160.85, 54.15) --
	(152.59, 54.15) --
	(144.33, 52.15) --
	(136.08, 49.93) --
	(127.82, 49.93) --
	(119.56, 47.46) --
	(111.31, 47.46) --
	(103.05, 44.66) --
	( 94.79, 41.42) --
	( 86.54, 41.42) --
	( 78.28, 37.60) --
	( 70.02, 39.60) --
	( 61.77, 37.60) --
	cycle;

\path[draw=drawColor,line width= 0.6pt,line join=round] ( 61.77, 46.11) --
	( 70.02, 47.46) --
	( 78.28, 44.66) --
	( 86.54, 46.11) --
	( 94.79, 49.93) --
	(103.05, 52.15) --
	(111.31, 54.15) --
	(119.56, 54.15) --
	(127.82, 55.97) --
	(136.08, 55.97) --
	(144.33, 59.21) --
	(152.59, 59.21) --
	(160.85, 60.66) --
	(169.11, 59.21) --
	(177.36, 60.66) --
	(185.62, 60.66) --
	(193.88, 60.66) --
	(202.13, 60.66) --
	(210.39, 60.66) --
	(218.65, 62.01);
\definecolor{drawColor}{gray}{0.50}

\path[draw=drawColor,line width= 0.6pt,line join=round,line cap=round] ( 53.92, 33.48) rectangle (226.49,124.09);
\end{scope}
\begin{scope}
\path[clip] (  0.00,  0.00) rectangle (238.49,130.09);
\definecolor{drawColor}{RGB}{0,0,0}

\node[text=drawColor,anchor=base east,inner sep=0pt, outer sep=0pt, scale=  0.80] at ( 48.52, 48.84) {10};

\node[text=drawColor,anchor=base east,inner sep=0pt, outer sep=0pt, scale=  0.80] at ( 48.52, 97.18) {100};

\node[text=drawColor,anchor=base east,inner sep=0pt, outer sep=0pt, scale=  0.80] at ( 48.52,116.66) {253};
\end{scope}
\begin{scope}
\path[clip] (  0.00,  0.00) rectangle (238.49,130.09);
\definecolor{drawColor}{RGB}{0,0,0}

\path[draw=drawColor,line width= 0.6pt,line join=round] ( 50.92, 52.15) --
	( 53.92, 52.15);

\path[draw=drawColor,line width= 0.6pt,line join=round] ( 50.92,100.48) --
	( 53.92,100.48);

\path[draw=drawColor,line width= 0.6pt,line join=round] ( 50.92,119.97) --
	( 53.92,119.97);
\end{scope}
\begin{scope}
\path[clip] (  0.00,  0.00) rectangle (238.49,130.09);
\definecolor{drawColor}{RGB}{0,0,0}

\path[draw=drawColor,line width= 0.6pt,line join=round] (107.95, 30.48) --
	(107.95, 33.48);

\path[draw=drawColor,line width= 0.6pt,line join=round] (155.39, 30.48) --
	(155.39, 33.48);

\path[draw=drawColor,line width= 0.6pt,line join=round] ( 57.64, 30.48) --
	( 57.64, 33.48);

\path[draw=drawColor,line width= 0.6pt,line join=round] (222.78, 30.48) --
	(222.78, 33.48);
\end{scope}
\begin{scope}
\path[clip] (  0.00,  0.00) rectangle (238.49,130.09);
\definecolor{drawColor}{RGB}{0,0,0}

\node[text=drawColor,anchor=base,inner sep=0pt, outer sep=0pt, scale=  0.80] at (107.95, 21.46) {1000};

\node[text=drawColor,anchor=base,inner sep=0pt, outer sep=0pt, scale=  0.80] at (155.39, 21.46) {10000};

\node[text=drawColor,anchor=base,inner sep=0pt, outer sep=0pt, scale=  0.80] at ( 57.64, 21.46) {87};

\node[text=drawColor,anchor=base,inner sep=0pt, outer sep=0pt, scale=  0.80] at (222.78, 21.46) {263169};
\end{scope}
\begin{scope}
\path[clip] (  0.00,  0.00) rectangle (238.49,130.09);
\definecolor{drawColor}{RGB}{0,0,0}

\node[text=drawColor,anchor=base,inner sep=0pt, outer sep=0pt, scale=  1.00] at (140.21,  8.40) {$n$ = data points to segment (log scale)};
\end{scope}
\begin{scope}
\path[clip] (  0.00,  0.00) rectangle (238.49,130.09);
\definecolor{drawColor}{RGB}{0,0,0}

\node[text=drawColor,rotate= 90.00,anchor=base,inner sep=0pt, outer sep=0pt, scale=  1.00] at ( 16.66, 78.78) {$I$ = intervals stored};

\node[text=drawColor,rotate= 90.00,anchor=base,inner sep=0pt, outer sep=0pt, scale=  1.00] at ( 29.62, 78.78) {(log scale)};
\end{scope}
\end{tikzpicture}
  }
  \parbox{0.49\textwidth}{
    \input{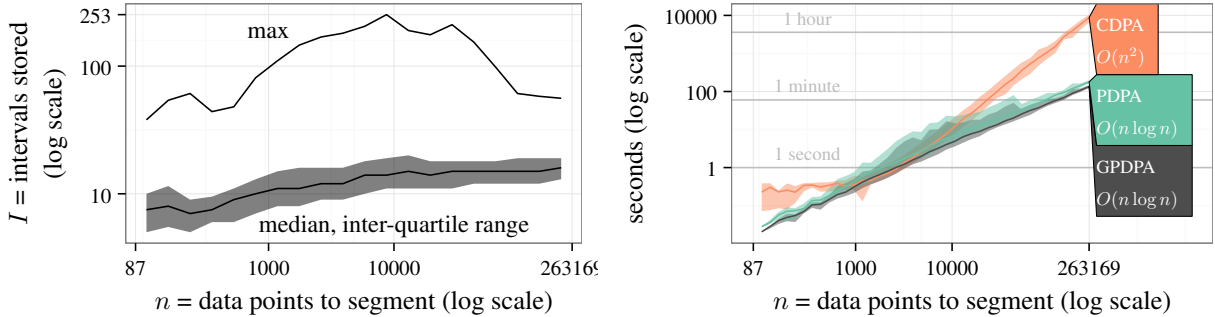}
  }
  \vskip -0.8cm
  \caption{Empirical speed analysis on 2752 count data vectors from
    the histone mark ChIP-seq benchmark. For each vector we ran the
    GPDPA with the up-down constraint and a max of $K=19$
    segments. The expected time complexity is $O(KnI)$ where $I$ is
    the average number of intervals (function pieces; candidate
    changepoints) stored in the $C_{k,t}$ cost
    functions. \textbf{Left}: number of intervals stored is
    $I=O(\log n)$ (median, inter-quartile range, and maximum over all
    data points $t$ and segments $k$).  \textbf{Right}: time
    complexity of the GPDPA is $O(n\log n)$ (median line and min/max
    band).}
  \label{fig:timings}
\end{figure*}

In this section, we show that our proposed GPDPA can be used to
overcome this speed drawback, while maintaining state-of-the-art
accuracy. To show the importance of enforcing the up-down constraint,
we consider the unconstrained Pruned Dynamic Programming Algorithm
(PDPA) of \citet{pruned-dp} as a baseline
(Table~\ref{tab:contribution}). We also compare against two popular
heuristics from the bioinformatics literature, in order to demonstrate
that constrained optimization algorithms such as the CDPA and GPDPA
are more accurate.

\subsection{Empirical time complexity in ChIP-seq data}
\label{sec:results_time}

The ChIP-seq benchmark consists of seven labeled histone data
sets.
Overall there are 2752 count data vectors $\mathbf y_i$ to segment,
varying in size from $n=87$ to $n=263169$ data. For each count data
vector $\mathbf y_i$, we ran each algorithm (CDPA, PDPA, GDPDA) with a
maximum of $K=19$ segments. This implies a maximum of 9 peaks (one for
each even-numbered segment), which is more than enough in these
relatively small data sets. To analyze the empirical time complexity,
we recorded the number of intervals stored in the $\FCC_{k,t}$ cost
functions (Section~\ref{sec:algorithms}), as well as the computation
time in seconds.

As in the PDPA, the time complexity of our proposed GPDPA is
$O(K n I)$, which depends on the number of intervals $I$ (candidate
changepoints) stored in the $\FCC_{k,t}$ cost functions
\citep{pruned-dp-new}. We observed that the number of intervals stored
by the GPDPA increases as a sub-linear function of the number of data
points $n$ (left of Figure~\ref{fig:timings}). For the largest data
set ($n=263169$), the algorithm only stored median=16 and maximum=43
intervals. The most intervals stored was 253 for one data set with
$n=7776$. These results suggest that our proposed GPDPA only stores on
average $O(\log n)$ intervals (possible changepoints), as in the
original PDPA. The overall empirical time complexity is thus
$O(K n \log n)$ for $K$ segments and $n$ data points.

We recorded the timings of each algorithm for computing models with up
to $K=19$ segments (a total of 10 peak models $k\in\{1,3,\dots,19\}$,
from 0 to 9 peaks). Since $K$ is constant, the expected time
complexity was $O(n^2)$ for the CDPA and $O(n \log n)$ for the PDPA
and GPDPA. In agreement with these expectations, our proposed GPDPA
shows $O(n\log n)$ asymptotic timings similar to the PDPA (right of
Figure~\ref{fig:timings}). 

It is clear that the $O(n^2)$ CDPA algorithm is slower than the other
two algorithms, especially for larger data sets. For the largest count
data vector ($n=263169$), the CDPA took over two hours, but the GPDPA
took only
about two minutes. Our proposed GPDPA is nearly as fast as MACS
\citep{MACS}, a heuristic from the bioinformatics literature which
took about 1 minute to compute 10 peak models for this data set. 

The total computation time to process all 2752 count data vectors was
156 hours for the CDPA, and only 6 hours for the GPDPA (26 times
faster). Overall, these results suggest that our proposed GPDPA enjoys
$O(n\log n)$ time complexity in ChIP-seq data, which makes it possible
to use for very large data sets.

\subsection{Test accuracy in ChIP-seq data}


\begin{figure*}[t!]
  \centering
  \includegraphics[width=\textwidth]{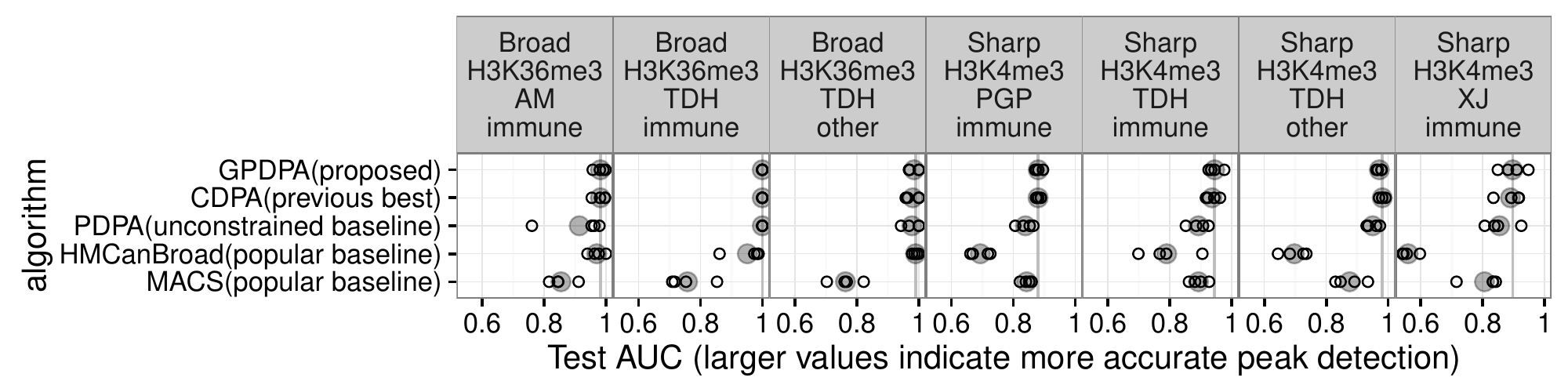}
  \vskip -0.5cm
  \caption{Four-fold cross-validation 
was used to estimate peak detection accuracy. 
    Each panel shows one of seven ChIP-seq data sets, 
    labeled by experiment (Broad H3K36me3), 
    labeler (AM), and cell types (immune).
    Each black circle shows test AUC in one of four
    cross-validation folds, the shaded grey circle is the mean, and
    the vertical line is the maximum mean in each data set. It is
    clear that the proposed GPDPA is
    just as accurate as the previous state-of-the-art CDPA, and both are
    more accurate than the other baseline methods. 
  }
  \label{fig:test-error-dots}
\end{figure*}

For the optimal changepoint detection algorithms (CDPA, PDPA, GPDPA),
the prediction problem simplifies to selecting the number of segments
$K_i\in \{1, 3,\dots, 19\}$ for each data vector $i$, resulting in a
predicted peak vector $c^{K_i}(\mathbf y_i)\in\{0,1\}^n$. We select the
number of segments using an oracle penalty
$K_i^\lambda=\argmin_k l_{ik} + \lambda o_{ik}$
\citep{cleynen2013segmentation}, where $l_{ik}$ is the Poisson loss and
$o_{ik}$ is the oracle model complexity for the model with $k$
segments for data vector $i$. 
The problem thus simplifies to learning a scalar
penalty constant $\lambda$,
\begin{equation}
  \label{eq:learn-lambda}
  \minimize_{\lambda}
  \sum_{i=1}^m E\left[
    c^{K_i^\lambda}(\mathbf y_i), 
    L_i\right].
\end{equation}


To demonstrate that changepoint detection algorithms are more accurate
than typical heuristics from the bioinformatics literature, we also
compared with the MACS and HMCanBroad methods \citep{MACS,
  HMCan}. MACS is a popular heuristic for data with a sharp peak
pattern such as H3K4me3, and \mbox{HMCanBroad} is a popular heuristic
for data with a broad peak pattern such as H3K36me3. Although they are
not designed for supervised learning, we trained them by performing
grid search over a single significance threshold parameter (qvalue for
MACS and finalThreshold for HMCanBroad).

In each of the seven data sets in the histone benchmark,
we performed four-fold cross-validation and computed test AUC (area
under the Receiver Operating Characteristic curve) to estimate the
accuracy of each algorithm. The previous algorithm with
state-of-the-art accuracy on this benchmark was the CDPA, which
enforces the up-down constraint on segment means. We expected our
proposed GPDPA to perform just as well, since it also enforces that
constraint. In agreement with our expectation, we observed that the
CDPA and GPDPA yield comparable test AUC in all seven data sets
(Figure~\ref{fig:test-error-dots}). In contrast, the unconstrained
PDPA had much lower test AUC in several data sets, because of lower
true positive rates. These results provide convincing evidence that
the constraint is necessary for optimal peak detection accuracy.

Since the baseline HMCanBroad algorithm was designed for data with a
broad peak pattern, we expected it to perform well in the H3K36me3
data. In agreement with this expectation, HMCanBroad showed
state-of-the-art test AUC in two H3K36me3 data sets (broad peak
pattern), but was very inaccurate in four H3K4me3 data sets (sharp
peak pattern). We expected the baseline MACS algorithm to perform well
in the H3K4me3 data sets, since it was designed for data with a sharp
peak pattern. In contrast to this expectation, MACS had test AUC
values much lower than the optimization-based algorithms in all seven
data sets (Figure~\ref{fig:test-error-dots}). These results suggest
that constrained optimal changepoint detection algorithms are more
accurate than the heuristics from the bioinformatics literature.


\section{Discussion and conclusions}
\label{sec:discussion}

Algorithms for changepoint detection can be classified in terms of
time complexity, optimality, constraints, and pruning techniques
(Table~1). In this paper, we investigated generalizing the functional
pruning technique originally discovered by \citet{pruned-dp} and
\citet{phd-johnson}. We showed that the functional pruning technique can
be used to compute optimal changepoints subject to affine constraints
on adjacent segment mean parameters.

We showed that our proposed Generalized Pruned Dynamic Programming
Algorithm (GPDPA) enjoys the same log-linear $O(Kn\log n)$ time
complexity as the original unconstrained PDPA, when applied to peak
detection in ChIP-seq data sets (Figure~\ref{fig:timings}). However,
we observed that the up-down constrained GPDPA is much more accurate
than the unconstrained PDPA (Figure~\ref{fig:test-error-dots}). These
results suggest that the up-down constraint is necessary for computing
a changepoint model with optimal peak detection accuracy. Indeed, we
observed that the GPDPA enjoys the same state-of-the-art accuracy as
the previous best, the relatively slow quadratic $O(Kn^2)$ time
CDPA.

We observed that the heuristic algorithms which are popular in the
bioinformatics literature (MACS, HMCanBroad) are much less accurate
than the optimal changepoint detection algorithms (CDPA, PDPA,
GPDPA). In the past these sub-optimal heuristics have been preferred
because of their speed. For example, the CDPA took 2 hours to compute
10 peak models in the largest data set in the ChIP-seq benchmark,
whereas the GPDPA took 2 minutes, and the MACS heuristic took 1
minute. Using our proposed GPDPA, it is now possible to compute highly
accurate models in an amount of time that is comparable to heuristic
algorithms. Our proposed GPDPA can now be used as an optimal
alternative to heuristic algorithms, even for large data sets.

For future work we will be interested in exploring pruning techniques
for other constrained changepoint models. When the number of expected
changepoints grows with the number of data points, then $K=O(n)$ and
our proposed GPDPA has $O(n^2 \log n)$ average time complexity (since
it computes all models with $1,\dots,K$ segments). We have already
started modifying the GPDPA for optimal partitioning
\citep{optimal-partitioning}, which results in the Generalized
Functional Prunining Optimal Partitioning (GFPOP) algorithm
(Section~\ref{sec:GFPOP}). It computes the $K$-segment model for a
single penalty constant $\lambda$ (without computing models with
$1,\dots,K-1$ segments) in $O(n\log n)$ time.

\section{Reproducible Research Statement}

The source code and data used to create this manuscript (including all
figures) is available at
\url{https://github.com/tdhock/PeakSegFPOP-paper}

\section{Acknowledgements}

This work was supported by a Discovery Frontiers project grant, ``The
Cancer Genome Collaboratory,'' jointly sponsored by the Natural
Sciences and Engineering Research Council (NSERC), Genome Canada (GC),
the Canadian Institutes of Health Research (CIHR) and the Canada
Foundation for Innovation (CFI).

\newpage

\appendix
The supplementary materials begin on this page.
\section{Proof of optimality of 
dynamic programming algorithm}
\label{sec:proof}
In this section we give a proof of Theorem~\ref{thm:gpdpa}.
\begin{proof}
Case 1 and 2 follow from the definition of $\FCC_{K,t}(u)$.

We now focus on case 3.
First notice that by definition of $\FCC_{K,t+1}(u)$ (i.e. the optimal segmentation) we must have
$\FCC_{K,t+1}(u) \leq \FCC_{K,t}(u) + \ell(y_t,u)$ and also
$\FCC_{K,t+1}(u) \leq \FCC_{K-1,t}(u) + \ell(y_t,u)$. Thus we have
$\FCC_{K,t+1}(u) \leq \min \{ \FCC_{K,t}(u) , \FCC_{K-1,t}(u) \} + \ell(y_{t+1},u)$.

Now let us assume,
$$\FCC_{K,t+1}(u) < \min \{ \FCC_{K,t}(u) , \FCC_{K-1,t}(u) \} + \ell(y_{t+1},u).$$
We will show that this lead to a contradiction.

We consider the optimal segmentation
$(\mathbf u, \mathbf t)\in{\mathcal I}_{t+1}^K$ which achieves the
optimum of $\FCC_{K,t+1}(u)$. We consider two possible cases:
\begin{description}
\item[Scenario 1: $t_K < t$.]  Define $\mathbf t'$ such that for all
  $i < K$, we have $t'_i = t_i$ and $t'_K = t$.  We have
  $(\mathbf u, \mathbf t')\in{\mathcal I}_{t}^K$.  We can thus
  decompose $\FCC_{K,t+1}(u)$ as

$$\FCC_{K,t+1}(u) = \sum_{k=1}^K
  h_{t'_{k-1}, t'_k}(u_k) + \ell(y_{t+1},u).$$ 

By assumption we would recover $\sum_{k=1}^K h_{t'_{k-1}, t'_k}(u_k) < \FCC_{K,t}(u)$ which is a contradiction
by definition of $\FCC_{K,t}(u)$. 

\item[Scenario 2: $t_K=t$.]  Define $\mathbf t'$ such that for all
  $i < K-1$, we have $t'_i = t_i$ and $t'_{K-1} = t$. Also define
  $\mathbf u'$ such that for all $k \leq K-1$, we have $u'_k = u_k$.
  Thus $(\mathbf u', \mathbf t')\in{\mathcal I}_{t}^{K-1}$, and can
  then decompose $\FCC_{K,t+1}(u)$ as

$$\FCC_{K,t+1}(u) = \sum_{k=1}^K
  h_{t'_{k-1}, t'_k}(u'_k) + \ell(y_{t+1},u).$$ 

By assumption we would recover $\sum_{k=1}^{K-1} h_{t'_{k-1}, t'_k}(u'_k) < \FCC_{K-1,t}(u)$ which is a contradiction
by definition of $\FCC_{K-1,t}(u)$. 
\end{description}
\end{proof}
We have thus proved that the dynamic programming update rules can be
used for computing the optimal cost functions $C_{k,t}$.

\section{Algorithm pseudocode}
\label{sec:pseudocode}
In this section we give pseudocode for our proposed Generalized Pruned
Dynamic Programming Algorithm (GPDPA), and related algorithms.
\subsection{GPDPA for reduced isotonic regression}
We begin by providing a pseudocode solver for the simplest case, the
reduced isotonic regression problem. We propose the following data
structures and sub-routines for the computation:
\begin{itemize}
\item FunctionPiece: a data structure which represents one piece of a
  $C_{k,t}(u)$ cost function (for one interval of mean values $u$). It
  has coefficients which depend on the convex loss function $\ell$
  (for the square loss it has three real-valued coefficients $a,b,c$
  which define a function $au^2 + bu + c$). It also has two
  real-valued elements for min/max mean values
  $[\underline u, \overline u]$ of this interval, meaning the function
  $C_{k,t}(u)=au^2 + bu + c$ for all
  $u\in[\underline u, \overline u]$. Finally it stores a previous
  segment endpoint $t'$ (integer) and mean $u'$ (real).
\item FunctionPieceList: an ordered list of FunctionPiece objects,
  which exactly stores a cost function $\FCC_{k,t}(u)$ for all values
  of last segment mean $u$.
\item $\text{OnePiece}(y, \underline u, \overline u)$: a sub-routine
  that initializes a FunctionPieceList with just one FunctionPiece
  $\ell(y, u)$ defined on $[\underline u, \overline u]$.
\item $\text{MinLess}(t, f)$: an algorithm that inputs a changepoint
  and a FunctionPieceList, and outputs the corresponding min-less
  operator $f^\leq$ (another FunctionPieceList), with the previous
  changepoint set to $t'=t$ for each of its pieces. This algorithm
  also needs to store the previous mean value $u'$ for each of the
  function pieces (see pseudocode below). 
\item $\text{MinOfTwo} (f_1, f_2)$: an algorithm that inputs two
  FunctionPieceList objects, and outputs another FunctionPieceList
  object which is their minimum. 
\item $\text{ArgMin}(f)$: an algorithm that inputs a FunctionPieceList
  and outputs three values: the optimal mean $u^*=\argmin_u f(u)$, the
  previous segment end $t'$ and mean $u'$.
\item $\text{FindMean}(u, f)$ an algorithm that inputs a mean value
  and a FunctionPieceList. It finds the FunctionPiece in $f$ with mean
  $u\in[\underline u, \overline u]$ contained in its interval, then
  outputs the previous segment end $t'$ and mean $u'$ stored in that
  FunctionPiece.
\end{itemize}
The above data structures and sub-routines are used in the following
pseudocode, which describes the GPDPA for solving the reduced
isotonic regression problem.
\begin{algorithm}[H]
\begin{algorithmic}[1]
\STATE Input: data set $\mathbf y\in\RR^n$, maximum number of segments $K\in\{2,\dots, n\}$.
\STATE Output: matrices of optimal segment means $U\in\RR^{K\times K}$ 
and ends $T\in\{1,\dots,n\}^{K\times K}$
\STATE Compute min $\underline y$ and max $\overline y$ of $\mathbf y$.
\label{line:min-max}
\STATE $\FCC_{1,1}\gets \text{OnePiece}(y_1, \underline y, \overline y)$
\label{line:init-1}
\STATE for data points $t$ from 2 to $n$:
\begin{ALC@g}
  \STATE $\FCC_{1,t}\gets \text{OnePiece}(y_t, \underline y, \overline y) + \FCC_{1,t-1}$
\label{line:init-t}
\end{ALC@g}
\STATE for segments $k$ from 2 to $K$: for data points $t$ from $k$ to $n$: // dynamic programming
\label{line:for-k-t}
\begin{ALC@g}
  \STATE $\text{min\_prev}\gets \text{MinLess}(t-1, \FCC_{k-1,t-1})$ // this is $\FCC_{k-1,t-1}^\leq$
  \label{line:MinLess}
    \STATE $\text{min\_new}\gets\text{min\_prev}$ if $t=k$, 
else $\text{MinOfTwo}(\text{min\_prev}, \FCC_{k, t-1})$
  \label{line:MinOfTwo}
  \STATE $\FCC_{k,t}\gets \text{min\_new} + \text{OnePiece}(y_t, \underline y, \overline y)$
  \label{line:AddNew}
\end{ALC@g}
\STATE for segments $k$ from 1 to $K$: // decoding for every model size $k$
\label{line:for-k-decoding}
\begin{ALC@g}
  \STATE $u^*,t',u'\gets \text{ArgMin}(\FCC_{k,n})$
  \label{line:ArgMin}
  \STATE $U_{k,k}\gets u^*;\, T_{k,k}\gets t'$ // store mean of segment $k$ and end of segment $k-1$
  \label{line:decode-kk}
  \STATE for segment $s$ from $k-1$ to $1$: // decoding for every segment $s<k$
  \label{line:for-s-decoding}
  \begin{ALC@g}
    \STATE if $u' < \infty$: $u^*\gets u'$ // equality constraint active, $u_s = u_{s+1}$
    \label{line:equality-constraint-active}
    \STATE $t',u'\gets\text{FindMean}(u^*, \FCC_{s,t'})$
    \label{line:FindMean}
    \STATE $U_{k,s}\gets u^*;\, T_{k,s}\gets t'$ // store mean of segment $s$ and end of segment $s-1$
    \label{line:decode-ks}
  \end{ALC@g}
\end{ALC@g}
\caption{\label{algo:GPDPA}Generalized Pruned Dynamic Programming
  Algorithm (GPDPA) for solving the reduced isotonic regression
  problem.}
\end{algorithmic}
\end{algorithm}

Algorithm~\ref{algo:GPDPA} begins by computing the min/max on
line~\ref{line:min-max}.  The main storage of the algorithm is
$\FCC_{k,t}$, which should be initialized as a $K\times n$ array of
empty FunctionPieceList objects. The computation of $\FCC_{1,t}$ for
all $t$ occurs on lines~\ref{line:init-1}--\ref{line:init-t}. 

The dynamic programming updates occur in the for loops on
lines~\ref{line:for-k-t}--\ref{line:AddNew}. Line~\ref{line:MinLess}
uses the MinLess sub-routine to compute the temporary
FunctionPieceList min\_prev (which represents the function
$\FCC_{k-1,t-1}^\leq$). Line~\ref{line:MinOfTwo} sets the temporary
FunctionPieceList min\_new to the cost of the only possible
changepoint if $t=k$; otherwise, it uses the MinOfTwo sub-routine to
compute the cost of the best changepoint for every possible mean
value. Line~\ref{line:AddNew} adds the cost of data point $t$, and
stores the resulting FunctionPieceList in $\FCC_{k,t}$.

The decoding of the optimal segment mean $U$ (a $K\times K$ array of
real numbers) and end $T$ (a $K\times K$ array of integers) variables
occurs in the for loops on
lines~\ref{line:for-k-decoding}--\ref{line:decode-ks}. For a given
model size $k$, the decoding begins on line~\ref{line:ArgMin} by using
the ArgMin sub-routine to solve $u^* = \argmin_u \FCC_{k,n}(u)$ (the
optimal values for the previous segment end $t'$ and mean $u'$ are
also returned). Now we know that $u^*$ is the optimal mean of the last
($k$-th) segment, which occurs from data point $t'+1$ to $n$. These
values are stored in $U_{k,k}$ and $T_{k,k}$
(line~\ref{line:decode-kk}). And we already know that the optimal mean
of segment $k-1$ is $u'$.  Note that the $u'=\infty$ flag means that
the equality constraint is active
(line~\ref{line:equality-constraint-active}). The decoding of the
other segments $s<k$ proceeds using the FindMean sub-routine
(line~\ref{line:FindMean}). It takes the cost $\FCC_{s,t'}$ of the
best model in $s$ segments up to data point $t'$, finds the
FunctionPiece that stores the cost of $u^*$, and returns the new
optimal values of the previous segment end $t'$ and mean $u'$. The
mean of segment $s$ is stored in $U_{k,s}$ and the end of segment
$s-1$ is stored in $T_{k,s}$ (line~\ref{line:decode-ks}).

The time complexity of Algorithm~\ref{algo:GPDPA} is $O(K n I)$ where
$I$ is the complexity of the MinLess and MinOfTwo sub-routines, which
is linear in the number of intervals (FunctionPiece objects) that are
used to represent the cost functions. There are pathological synthetic
data sets for which the number of intervals $I=O(n)$, implying a
time complexity of $O(K n^2)$. However, the average number
of intervals in real data sets is empirically $I=O(\log n)$, so the
average time complexity of Algorithm~\ref{algo:GPDPA} is
$O(K n \log n)$.

\subsection{MinLess algorithm}
\label{sec:MinLess}
The MinLess algorithm implements the min-less operator $f^\leq$
(Definition~\ref{def:min-less}), which is an essential sub-routine of
the GPDPA. The following sub-routines are used to implement the
MinLess algorithm.

\begin{itemize}
\item $\text{GetCost}(p, u)$: an algorithm that takes a FunctionPiece
  object $p$, and a mean value $u$, and computes the cost at $u$. For
  a square loss FunctionPiece $p$ with coefficients $a,b,c\in\RR$, we
  have $\text{GetCost}(p,u)=au^2+bu+c$.
\item $\text{OptimalMean}(p)$: an algorithm that takes one
  FunctionPiece object, and computes the optimal mean value. For a
  square loss FunctionPiece $p$ we have
  $\text{OptimalMean}(p)=-b/(2a)$.
\item $\text{ComputeRoots}(p, d)$: an algorithm that takes one
  FunctionPiece object, and computes the solutions to $p(u)=d$. For
  the square loss we propose to use the quadratic formula. For other
  convex losses that do not have closed form expressions for their
  roots, we propose to use Newton's root finding method. Note that for
  some constants $d$ there are no roots, and the algorithm needs to
  report that.
\item $f.\text{push\_piece}(\underline u, \overline u, p, u')$: push a
  new FunctionPiece at the end of FunctionPieceList $f$, with
  coefficients defined by FunctionPiece $p$, on interval
  $[\underline u, \overline u]$, with previous segment mean set to
  $u'$.
\item $\text{ConstPiece}(c)$: sub-routine that initializes a
  FunctionPiece $p$ with constant cost $c$ (for the square loss it
  sets $a=b=0$ in $au^2 + bu + c$).
\end{itemize}

\begin{algorithm}[H]
\begin{algorithmic}[1]
\STATE Input: The previous segment end $t_{\text{prev}}$ (an integer), 
 and $f_{\text{in}}$ (a FunctionPieceList).
\STATE Output: FunctionPieceList $f_{\text{out}}$, initialized as an empty list.
\STATE $\text{prev\_cost} \gets\infty$
\STATE $\text{new\_lower\_limit}\gets \text{LowerLimit}(f_{\text{in}}[0])$.
\STATE $i\gets 0$; // start at FunctionPiece on the left
\STATE while $i < $ Length($f_{\text{in}}$): // continue until FunctionPiece on the right
\begin{ALC@g}
  \STATE FunctionPiece $p\gets f_{\text{in}}[i]$
  \STATE if prev\_cost = $\infty$: // look for min in this interval.
  \begin{ALC@g}
    \STATE $\text{candidate\_mean}\gets \text{OptimalMean}(p)$ 
    \STATE if $\text{LowerLimit}(p)< \text{candidate\_mean} < \text{UpperLimit}(p)$:
    \begin{ALC@g}
      \STATE $\text{new\_upper\_limit}\gets \text{candidate\_mean}$ // Minimum found in this interval.
      \STATE $\text{prev\_cost}\gets \text{GetCost}(p, \text{candidate\_mean})$
      \STATE $\text{prev\_mean}\gets \text{candidate\_mean}$
    \end{ALC@g}
    \STATE else: // No minimum in this interval.
    \begin{ALC@g}
      \STATE 
      $\text{new\_upper\_limit}\gets
 \text{UpperLimit}(p)$
    \end{ALC@g}
    \STATE $f_{\text{out}}\text{.push\_piece}(\text{new\_lower\_limit},\text{new\_upper\_limit},p,\infty)$
    \STATE $\text{new\_lower\_limit}\gets\text{new\_upper\_limit}$
    \STATE $i\gets i+1$
  \end{ALC@g}
  \STATE else: // look for equality of $p$ and prev\_cost
  \begin{ALC@g}
    \STATE $(\text{small\_root},\text{large\_root})\gets\text{ComputeRoots}(p, \text{prev\_cost})$
    \STATE if $\text{LowerLimit}(p) < \text{small\_root} < \text{UpperLimit}(p)$:
    \begin{ALC@g}
      \STATE $f_{\text{out}}\text{.push\_piece}(
      \text{new\_lower\_limit}, 
      \text{small\_root}, 
      \text{ConstPiece}(\text{prev\_cost}), 
      \text{prev\_mean})$
      \STATE $\text{new\_lower\_limit}\gets \text{small\_root}$
      \STATE $\text{prev\_cost}\gets \infty$ 
    \end{ALC@g}
    \STATE else: // no equality in this interval
    \begin{ALC@g}
      \STATE $i\gets i+1$ // continue to next FunctionPiece
    \end{ALC@g}
  \end{ALC@g}
\end{ALC@g}
  \STATE if $\text{prev\_cost} < \infty$: // ending on constant piece
  \begin{ALC@g}
    \STATE $f_{\text{out}}\text{.push\_piece}(
    \text{new\_lower\_limit}, 
    \text{UpperLimit}(p), 
    \text{ConstPiece}(\text{prev\_cost}), 
    \text{prev\_mean})$
  \end{ALC@g}
\STATE Set all previous segment end $t'=t_{\text{prev}}$  for all FunctionPieces in $f_{\text{out}}$
\caption{\label{algo:minless}MinLess algorithm.}
\end{algorithmic}
\end{algorithm}

Consider Algorithm~\ref{algo:minless} which contains pseudocode for
the computation of the min-less operator. The algorithm initializes
prev\_cost (line~3), which is a state variable that is used on line~8
to decide whether the algorithm should look for a local minimum or an
intersection with a finite cost. Since prev\_cost is initially set to
$\infty$, the algorithm begins by following the convex function pieces
from left to right until finding a local minimum. If no minimum is found in
a given convex FunctionPiece (line~15), it is simply pushed on to the
end of the new FunctionPieceList (line~16). If a minimum occurs within
an interval (line~10), the cost and mean are stored (lines 11--12),
and a new convex FunctionPiece is created with upper limit ending at
that mean value (line~16). Then the algorithm starts looking for
another FunctionPiece with the same cost, by computing the smaller
root of the convex loss function (line~20). When a FunctionPiece is
found with a root in the interval (line~21), a new constant
FunctionPiece is pushed (line~22), and the algorithm resumes searching
for a minimum. At the end of the algorithm, a constant FunctionPiece
is pushed if necessary (line~28). The complexity of this algorithm is
$O(I)$ where $I$ is the number of FunctionPiece objects in
$f_\text{in}$.

The algorithm which implements the min-more operator is
analogous. Rather than searching from left to right, it searches from
right to left. Rather than using the small root (line~21), it uses the
large root.

\subsection{Implementation details}

Some implementation details that we found to be important:
\begin{description}
\item[Weights] for data sequences that contain repeats it is
  computationally advantageous to use a run-length encoding of the
  data, and a corresponding loss function. For example if the data
  sequence 5,1,1,1,0,0,5,5 is encoded as $n=4$ counts $y_t$ 5,1,0,5 with
  corresponding weights $w_t$ 1,3,2,2 then the Poisson loss function
  for mean $\mu$ is $\ell(y_t, w_t, \mu) = w_t(\mu- y_t\log \mu)$.
\item[Mean cost] The text defines $C_{k,t}$ functions as the total
  cost. However for very large data sets the cost values will be very
  large, resulting in numerical instability. To overcome this issue we
  instead implemented update rules using the mean cost.  For weights
  $W_{t}=\sum_{i=1}^t w_i$, the update rule to compute the mean cost is
$$  C_{k,t}(\mu) = \frac{1}{W_{t}} \left[\ell(y_t, \mu) + 
W_{t-1}
\min\{ C_{k,t-1}(\mu),\, C_{k-1,t-1}^\leq(\mu)  \}\right]$$
\item[Intervals in log(mean) space] For the Poisson model of
  non-negative count data $y_t\in\{0,1,2,\dots\}$ there is no possible
  mean $\mu$ value less than 0. We thus used $\log(\mu)$ values to
  implement intervals in FunctionPiece objects. For example rather
  than storing $\mu\in[0,1]$ we store $\log\mu\in[-\infty, 0]$.
\item[Root finding] For the ComputeRoots sub-routine for the Poisson
  loss, we used Newton root finding. For the larger root we solve
  $a\log\mu + b\mu + c = 0$ (linear as $\mu\rightarrow\infty$) and for
  the smaller root we solve $a x + be^x + c = 0$ ($x=\log \mu$, linear
  as $x\rightarrow -\infty$ and $\mu\rightarrow 0$). We stop the root
  finding when the cost is near zero (absolute cost value less
  than $10^{-12}$).
\item[Storage] Since the dynamic programming update rule for $C_{k,t}$
  only depends on $C_{k-1,t-1}^\geq$ and $C_{k,t-1}$, these are the
  only functions that need to be in memory, and the rest of the cost
  functions can be stored on disk (until the decoding step). We used
  the Berkeley DB Standard Template Library to store all the $C_{k,t}$
  as a vector of FunctionPieceList objects.
\end{description}

\subsection{Penalized version of reduced isotonic regression}

\citet{fpop} proposed the Functional Pruning Optimal Partitioning
(FPOP) algorithm to solve the ``penalized'' or ``optimal
partitioning'' version of the segment neighborhood problem, where the
constraint of $K$ segments is replaced by a non-negative penalty
$\lambda\in\RR_+$ on the number of changes in the objective
function. Rather than computing all models from 1 to $K$ segments (as
in the PDPA), the FPOP algorithm computes the single model with $K$
segments (without computing models from 1 to $K-1$ segments). The same
penalization idea can be applied to models with affine constraints
between adjacent segment means.  The penalized version of the reduced
isotonic regression problem (\ref{eq:reduced}) can be stated as
\begin{align}
  \label{eq:penalized_reduced_isotonic}
  \minimize_{
    \substack{
    \mathbf m\in\RR^n\\
\mathbf c\in\{0,1\}^{n-1}
}
    } &\ \ 
  \sum_{t=1}^n \ell(y_t, m_t) + \lambda \sum_{t=1}^{n-1} I(c_t \neq 0) \\
  \text{subject to\,} &\ \ c_t = 0 \Rightarrow m_t = m_{t+1}
  \nonumber\\
&\ \ c_t = 1 \Rightarrow m_t \leq m_{t+1}.
\nonumber
\nonumber
\end{align}
Note that the $c_t$ variable is a changepoint indicator.  The same
functional pruning techniques used for the GPDPA can be exploited to
create a solver for this problem. This results in the Generalized
Functional Pruning Optimal Partitioning Algorithm (GFPOP, see
Table~\ref{tab:GFPOP}).

\begin{table}
  \centering
\begin{tabular}{c|c|c}
  & Segment Neighborhood & Optimal Partitioning\\
\hline
unconstrained & PDPA & FPOP \\
\hline
constrained & GPDPA & GFPOP 
\end{tabular}
\caption{
Algorithms for solving constrained and unconstrained versions 
of the Segment Neighborhood and Optimal Partitioning problems. 
PDPA = Pruned Dynamic Programming Algorithm, 
FPOP = Functional Pruning Optimal Partitioning, 
G = Generalized (can handle affine constraints on adjacent segment means).}
\label{tab:GFPOP}
\end{table}

Let $\overline C_{\lambda,t}(u)$ be the
penalized cost of the most likely segmentation up to data point $t$,
with last segment mean $u$. The initialization for the first data
point is $\overline C_{\lambda,1}(u) = \ell(y_1, u)$. The dynamic programming update rule
for all data points $t>1$ is
\begin{equation}
  \overline C_{\lambda,t}(u) = \ell(y_t, u) + \min\{
  \overline C_{\lambda,t-1}^\leq(u) + \lambda,\, \overline C_{\lambda,t-1}(u)
  \}.
\end{equation}
The same sub-routines described in Section~\ref{sec:MinLess} can be
used to implement the algorithm below, which solves the penalized
reduced isotonic regression problem
(\ref{eq:penalized_reduced_isotonic}).
\begin{algorithm}[H]
\begin{algorithmic}[1]
\STATE Input: data set $\mathbf y\in\RR^n$, penalty constant $\lambda\geq 0$.
\STATE Output: vectors of optimal segment means $U\in\RR^{n}$ and ends $T\in\{1,\dots,n\}^{n}$
\STATE Compute min $\underline y$ and max $\overline y$ of $\mathbf y$.
\label{line:op-min-max}
\STATE $\overline C_{\lambda,1}\gets \text{OnePiece}(y_1, \underline y, \overline y)$
\STATE for data points $t$ from 2 to $n$: // dynamic programming
\label{line:for-dp-t}
\begin{ALC@g}
  \STATE $\text{min\_prev}\gets \lambda + \text{MinLess}(t-1, \overline C_{\lambda,t-1})$
  \label{line:op-MinLess}
  \STATE $\text{min\_new}\gets \text{MinOfTwo}(\text{min\_prev}, \overline C_{\lambda, t-1})$
  \label{line:op-MinOfTwo}
  \STATE $\overline C_{\lambda,t}\gets \text{min\_new} + \text{OnePiece}(y_t, \underline y, \overline y)$
  \label{line:op-AddNew}
\end{ALC@g}
\STATE $u^*,t',u'\gets \text{ArgMin}(\overline C_{\lambda,n})$ // begin decoding
\label{line:op-ArgMin}
\STATE $i\gets 1;\, U_{i}\gets u^*;\, T_{i}\gets t'$
\label{line:op-store-i}
\STATE while $t' > 0$:
\begin{ALC@g}
  \STATE if $u' < \infty$: $u^*\gets u'$
  \STATE $t',u'\gets\text{FindMean}(u^*, \overline C_{\lambda,t'})$
  \STATE $i\gets i+1;\, U_{i}\gets u^*;\, T_{i}\gets t'$
\label{line:op-i+1}
\end{ALC@g}
\caption{\label{algo:OPIR}Generalized Functional Pruning Optimal
  Partitioning (GFPOP) for penalized reduced isotonic regression}
\end{algorithmic}
\end{algorithm}

Algorithm~\ref{algo:OPIR} begins by computing the min/max
(line~\ref{line:op-min-max}). The main storage
of the algorithm is $\overline C_{\lambda, t}$, which should be
initialized as an array of $n$ empty FunctionPieceList objects. 

The dynamic programming recursion in this algorithm has only one for
loop over data points $t$ (line~\ref{line:for-dp-t}). The penalty
constant $\lambda$ is added to all of the function pieces that result
from MinLess (line~\ref{line:op-MinLess}), before computing MinOfTwo
(line~\ref{line:op-MinOfTwo}). The last step of each dynamic
programming update is to add the cost of the new data point
(line~\ref{line:op-AddNew}).

The decoding process on lines~\ref{line:op-ArgMin}--\ref{line:op-i+1}
is essentially the same as the GPDPA (Algorithm~\ref{algo:GPDPA}). The
last segment mean and second to last segment end are first stored on
line~\ref{line:op-store-i} in $(U_1,T_1)$. For each other segment $i$,
the mean and previous segment end are stored on line~\ref{line:op-i+1}
in $(U_i,T_i)$. Note that there should be space to store $(U_i,T_i)$
parameters for up to $n$ segments. However, there are usually less
than $n$ segments, and the algorithm should return a special flag for
unused parameters, for example $(U_i=\infty, T_i=-1)$.

The time complexity of Algorithm~\ref{algo:OPIR} is $O(n I)$, where
$I$ is the time complexity of the MinLess and MinOfTwo
sub-routines. As in the GPDPA, the time complexity of these
sub-routines is linear in the number of intervals (FunctionPiece
objects) that are used to represent the $\overline C_{\lambda, t}$
cost functions. Since the number of intervals in real data is
typically $I=O(\log n)$ (see Section~\ref{sec:results_time}), the
overall time complexity of Algorithm~\ref{algo:OPIR} is on average
$O(n \log n)$.

\subsection{Generalized Functional Pruning Optimal Partitioning
  Solvers}
\label{sec:GFPOP}

The GFPOP algorithm can solve problems with more general constraints
than reduced isotonic regression. Let $G=(V,E)$ be a directed graph
that represents the model constraints (for examples see
Figure~\ref{fig:state-graphs}). The vertices $V=\{1,\dots,|V|\}$ can be
represented as integers, one for every distinct state. The edges
$E=\{1,\dots,|E|\}$ is another set of integers, each of which represents
one of the possible changes between states. Each edge/change $c\in E$
has corresponding data
$(\underline v_c, \overline v_c, \lambda_c, g_c)$ which specifies a
transition from state $\underline v_c$ to state $\overline v_c$, with
a penalty of $\lambda_c\in\RR_+$, and a constraint function
$g_c:\RR\times\RR\rightarrow\RR$.

\begin{figure*}
  \centering
\parbox{1in}{
\centering \textbf{Unconstrained}\\
  \begin{tikzpicture}[->,>=stealth',shorten >=1pt,auto,node distance=3cm,
                    thick,main node/.style={circle,draw}]

  \node[main node] (1) {1};

  \path[every node/.style={font=\sffamily\small}]
    (1) edge [loop below] node {$g_0, \lambda$} (1);
\end{tikzpicture}
}
\parbox{1in}{
\centering \textbf{Reduced\\
isotonic\\
 regression}\\
  \begin{tikzpicture}[->,>=stealth',shorten >=1pt,auto,node distance=3cm,
                    thick,main node/.style={circle,draw}]

  \node[main node] (1) {1};

  \path[every node/.style={font=\sffamily\small}]
    (1) edge [loop below] node {$g_\uparrow, \lambda$} (1);
\end{tikzpicture}
}
\parbox{1.5in}{
\centering \textbf{Peak detection}
  \begin{tikzpicture}[->,>=stealth',shorten >=1pt,auto,node distance=3cm,
                    thick,main node/.style={circle,draw}]

  \node[main node] (1) {peak};
  \node[main node] (2) [below of=1] {background};

  \path[every node/.style={font=\sffamily\small}]
    (2) edge [bend left] node {$g_\uparrow, \lambda_\uparrow$} (1)
    (1) edge [bend left] node {$g_\downarrow, \lambda_\downarrow$} (2);
\end{tikzpicture}
}
\parbox{2.5in}{
\centering \textbf{Unimodal regression}
  \begin{tikzpicture}[->,>=stealth',shorten >=1pt,auto,node distance=3cm,
                    thick,main node/.style={circle,draw}]

  \node[main node] (1) {up/down};
  \node[main node] (2) [below left of=1] {up};
  \node[main node] (3) [below right of=1] {down};

  \path[every node/.style={font=\sffamily\small}]
    (2) edge [bend left] node {$g_\uparrow, \lambda_1$} (1)
    (1) edge [loop below] node {$g_\uparrow, \lambda_2$} (1)
    (1) edge [bend left] node {$g_\downarrow, \lambda_3$} (3)
    (3) edge [loop right] node {$g_\downarrow, \lambda_4$} (3);
\end{tikzpicture}
}
\caption{Examples of state graphs for four models. Nodes represent
  states and edges represent changes. Each change has a corresponding
  penalty $\lambda$, and a function $g$ that determines what types of
  changes are possible ($g_0$ any change, $g_\uparrow$ non-decreasing,
  $g_\downarrow$ non-increasing). Even if there is no edge from a node
  to itself, it is still possible to stay in the same state without
  introducing a changepoint and penalty. Note that the Unimodal
  regression state X should be interpreted as ``can change X''
  e.g. up/down means ``can change up/down.'' }
  \label{fig:state-graphs}
\end{figure*}
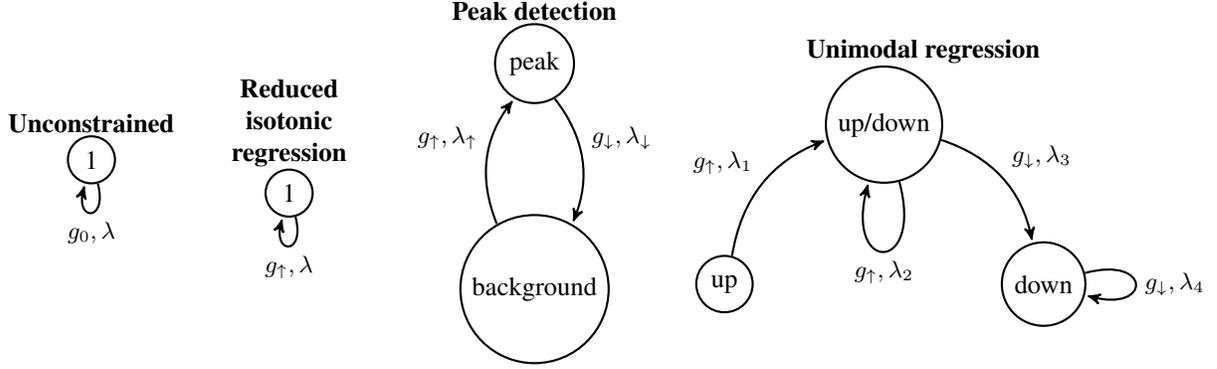
\begin{figure}[H]
  \centering
\parbox{2in}{
\centering
\textbf{Reduced Isotonic Regression}\\
  \begin{tikzpicture}[->,>=stealth',shorten >=1pt,auto,node distance=3cm,
  thick,main node/.style={circle,draw}]
  \node[main node] (t) {$C_{1,t}$};
  \node[main node] (t1) [right of=t] {$C_{1, t+1}$};
  \path[every node/.style={font=\small}]
    (t) edge [dotted] node {$C_{1, t}$} (t1)
    (t) edge [black, bend right] node [below] {$C_{1, t}^{g_\uparrow}+\lambda$} (t1)
;
\end{tikzpicture}
}
\parbox{2in}{
\centering
\textbf{Peak detection}\\
  \begin{tikzpicture}[->,>=stealth',shorten >=1pt,auto,node distance=3cm,
  thick,main node/.style={circle,draw}]
  \node[main node] (peak_t) {$C_{\text{peak},t}$};
  \node[main node] (bkg_t) [below of=peak_t] {$C_{\text{bkg}, t}$};
  \node[main node] (peak_t1) [right of=peak_t] {$C_{\text{peak}, t+1}$};
  \node[main node] (bkg_t1) [right of=bkg_t] {$C_{\text{bkg}, t+1}$};
  \path[every node/.style={font=\small}]
    (peak_t) edge [dotted] node {$C_{\text{peak}, t}$} (peak_t1)
    (peak_t) edge [black, bend right] node [right] {$C_{\text{peak}, t}^{g_\downarrow}+\lambda_\downarrow$} (bkg_t1)
    (bkg_t) edge [dotted] node[midway, below] {$C_{\text{bkg}, t}$} (bkg_t1)
    (bkg_t) edge [black, bend left] node[right] {$C_{\text{bkg}, t}^{g_\uparrow}+\lambda_\uparrow$} (peak_t1)
;
\end{tikzpicture}
}
\parbox{2in}{
\centering
\textbf{Unimodal regression}\\
  \begin{tikzpicture}[->,>=stealth',shorten >=1pt,auto,node distance=3cm,
  thick,main node/.style={circle,draw}]
  \node[main node] (up_t) {$C_{\text{up},t}$};
  \node[main node] (up_down_t) [below of=up_t] {$C_{\text{up/down}, t}$};
  \node[main node] (down_t) [below of=up_down_t] {$C_{\text{down}, t}$};
  \node[main node] (up_t1) [right of=up_t] {$C_{\text{up}, t+1}$};
  \node[main node] (up_down_t1) [below of=up_t1] {$C_{\text{up/down}, t+1}$};
  \node[main node] (down_t1) [below of=up_down_t1] {$C_{\text{down}, t+1}$};
  \path[every node/.style={font=\small}]
    (up_t) edge [dotted] node {$C_{\text{up}, t}$} (up_t1)
    (up_down_t) edge [dotted] node[midway, below] {$C_{\text{up/down}, t}$} (up_down_t1)
    (down_t) edge [dotted] node {$C_{\text{down}, t}$} (down_t1)
    (up_t) edge [black, bend left] node [near end] {$C_{\text{up}, t}^{g_\uparrow}+\lambda_{1}$} (up_down_t1)
    (up_down_t) edge [black, bend left] node [above] {$C_{\text{up/down}, t}^{g_\uparrow}+\lambda_{2}\hspace{0.8cm}$} (up_down_t1)
    (up_down_t) edge [black] node [below left] {$C_{\text{up/down}, t}^{g_\downarrow}+\lambda_3$} (down_t1)
    (down_t) edge [black, bend right] node [midway, below] {$C_{\text{down}, t}^{g_\downarrow}+\lambda_4$} (down_t1)
;
\end{tikzpicture}
}
\caption{Computation graphs which represent the dynamic programming
  updates (\ref{eq:generalDP}) for the state graph models in
  Figure~\ref{fig:state-graphs}. Nodes represent cost functions
  $C_{s,t}$ in each state $s$ at data $t$ and $t+1$. Edges represent
  inputs to the $\min\{\}$ operation (solid for a changepoint, dotted
  for no change). For example in reduced isotonic regression, $C_{1,t+1}(u) = \ell(y_{t+1}, u) + \min\{ C_{1,t}(u), C_{1,t}^{g_{\uparrow}}(u) + \lambda \}$.
}
  \label{fig:computation-graphs}
\end{figure}
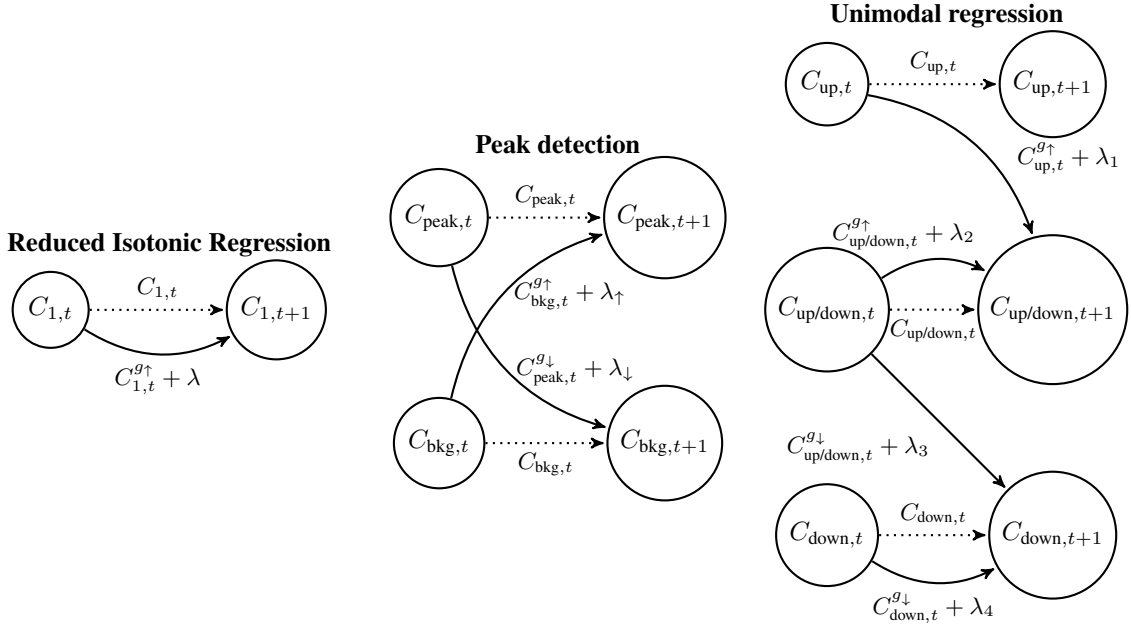

In the optimization problem below we
also allow $c=0$, which implies no penalty $\lambda_0=0$, and means no
change:
\begin{align}
  \label{eq:GFPOP_problem}
  \minimize_{
    \substack{
    \mathbf m\in\RR^n,\ \mathbf s\in V^n\\
\mathbf c\in \{0,1,\dots,|E|\}^{n-1}
}
    } &\ \ 
  \sum_{t=1}^n \ell(y_t, m_t) + \sum_{t=1}^{n-1} \lambda_{c_t} \\
  \text{subject to \hskip 0.9cm} &\ \ c_t = 0 \Rightarrow m_t = m_{t+1}
\text{ and } s_t= s_{t+1}
  \nonumber\\
&\ \ c_t \neq 0 \Rightarrow g_{c_t}(m_t, m_{t+1})\leq 0\text{ and }
(s_t,s_{t+1})=(\underline v_{c_t}, \overline v_{c_t}).
\nonumber
\end{align}
If some states are desired at the start or end, then those constraints
$s_1\in \underline S, s_n\in\overline S$ can also be enforced.  To
compute the solution to this optimization problem, we propose the
following dynamic programming algorithm.

Let $C_{s,t}(u)$ be the optimal cost with mean $u$ and state $s$ at
data point $t$. This quantity can be recursively computed using
dynamic programming. The initialization for the first data point is
$ C_{s,1}(u) = \ell(y_1, u)$ for all states $s$. The dynamic
programming update rule for all data points $t>1$ is
\begin{equation}
\label{eq:generalDP}
   C_{s,t}(u) = \ell(y_t, u) + \min\{
   M_{s,t-1}(u),\,  C_{s,t-1}(u)
  \},
\end{equation}
where the minimum cost of all possible changes to state $s$ from time
point $t-1$ is
\begin{equation}
  M_{s,t-1}(u) = \min_{c\in E_s} C^{g_c}_{\underline v_c, t-1}(u) + \lambda_c,
\end{equation}
and the set of all changes going to state $s$ is
\begin{equation}
  E_s = \{c\in E \mid \overline v_c = s\}.
\end{equation}
The computations required for the dynamic programming updates
(\ref{eq:generalDP}) can be visualized using a computation graph
(Figure~\ref{fig:computation-graphs}).

The pseudocode for the algorithm which implements the dynamic
programming updates (\ref{eq:generalDP}) is stated below.
\begin{algorithm}[H]
\begin{algorithmic}[1]
\STATE Input: data $\mathbf y$ and weights $\mathbf w$ (both size $n$), 
number of vertices/states $|V|$, starts $\underline S\subseteq V$, 
edges/transitions $E$.
\STATE Allocate $|V|\times n$ array of optimal cost functions $C_{s,t}$, 
each initialized to NULL.
\STATE for $t$ from $1$ to $n$:
\begin{ALC@g}
  \STATE if $t==1$:
  \begin{ALC@g}
    \STATE for $s$ in $\underline S$: // initialize cost for all possible starting states
    \begin{ALC@g}
      \STATE $C_{s,t}\gets\text{InitialCost}(y_t, w_t)$
    \end{ALC@g}
  \end{ALC@g}
  \STATE else:
  \begin{ALC@g}
    \STATE for $s$ from $1$ to $|V|$: 
    \begin{ALC@g}
      \STATE if $C_{s,t-1}$ is NOT NULL: // previous cost in this state has been computed
      \begin{ALC@g}
        \STATE $C_{s,t}\gets C_{s,t-1}$ // cost of staying in this state (no change)
      \end{ALC@g}
    \end{ALC@g}
    \STATE for ($\underline v$, $\overline v$, $\lambda$,
    ConstrainedCost) in $E$:
    \begin{ALC@g}
      \STATE if $C_{\underline v,t-1}$ is NOT NULL: // previous cost has been computed
      \begin{ALC@g}
        \STATE
        $\text{cost\_of\_change}\gets
        \text{ConstrainedCost}(C_{\underline v, t-1})$
        \STATE
        $\text{cost\_of\_change.set}
        (\underline v, t-1)$
        \STATE
        $\text{cost\_of\_change.addPenalty}
        (\text{$\lambda$})$
        \STATE if $C_{\overline v,t}$ is NULL:
        \begin{ALC@g}
          \STATE $C_{\overline v,t}\gets\text{cost\_of\_change}$
        \end{ALC@g}
        \STATE else:
        \begin{ALC@g}
          \STATE
          $C_{\overline v,t}\gets \text{MinOfTwo}(C_{\overline v,t},
          \text{cost\_of\_change})$
        \end{ALC@g}
      \end{ALC@g}
    \end{ALC@g}
    \STATE for $s$ from $1$ to $|V|$:
    \begin{ALC@g}
      \STATE if $C_{s,t}$ is NOT NULL:
      \begin{ALC@g}
        \STATE $C_{s,t}\text{.addDataPoint}(y_t, w_t)$
      \end{ALC@g}
    \end{ALC@g}
  \end{ALC@g}
\end{ALC@g}
\STATE Output: $|V|\times n$ array of optimal cost functions $C_{s,t}$.
\caption{\label{algo:GFPOP}Generalized Functional Pruning Optimal
  Partitioning Algorithm, Dynamic Programming (GFPOP-DP)}
\end{algorithmic}
\end{algorithm}

The algorithm above performs several checks if $C_{s,t}$ is NULL or
not (lines 9, 12, 16, 21). All costs are initialized as NULL (line
2). After having performed the cost update for data $t$, a NULL cost
$C_{s,t}$ means that state $s$ is not feasible at data $t$. For each
constraint function $g$ there is a corresponding ConstrainedCost
sub-routine that is mentioned on lines~11 and 13 (e.g. no constraint
$g_0$ MinUnconstrained, non-decreasing change $g_\uparrow$ MinLess,
non-increasing change $g_\downarrow$ MinMore). 

The average time and space complexity of Algorithm~\ref{algo:GFPOP} is
$O(|V| n I)$ where $|V|$ is the number of states and $I$ is the
average number of of intervals stored in the $|V|\times n$ array of
$C_{s,t}$ cost functions. We observed that $I=\log n$ in the empirical
tests of the peak detection model on ChIP-seq data
(Section~\ref{sec:results_time}), so we expect that the average time
complexity of Algorithm~\ref{algo:GFPOP} is $O(|V| n\log n)$.

Note that the algorithm above only performs the dynamic
programming. The decoding of optimal model parameters is achieved
using the algorithm below.

\begin{algorithm}[H]
\begin{algorithmic}[1]
\STATE Output: $|V|\times n$ array of optimal cost functions $C_{s,t}$, 
ends $\overline S\subseteq V$.
\STATE Allocate
$\mathbf m\in\RR^n$ (mean), 
$\mathbf s\in\ZZ^n$ (state), 
$\mathbf t\in\ZZ^n$ (segment end).
\STATE $u^*,s^*,t',s'\gets \text{ArgMin}(C_{\cdot,n}, \overline S)$ // begin decoding
\STATE $i\gets 1;\, m_{i}\gets u^*;\, s_i\gets s^*;\, t_{i}\gets n$
\STATE while $t' > 0$:
\begin{ALC@g}
  \STATE $i\gets i+1;\, t_{i}\gets t'$
  \STATE $u^*,s^*,t',s'\gets \text{ArgMin}(C_{s',t'})$
  \STATE $m_{i}\gets u^*;\, s_i\gets s^*$
\end{ALC@g}
\STATE Output: 
$\mathbf m$, 
$\mathbf s$, 
$\mathbf t$.
\caption{\label{algo:GFPOP-decode}Generalized Functional Pruning Optimal
  Partitioning Algorithm, decoding (GFPOP-decode)}
\end{algorithmic}
\end{algorithm}

\bibliographystyle{abbrvnat}
\bibliography{refs-abbrev}

\end{document}